%% file: main.tex
\documentclass[unicode,runningheads,envcountsame,envcountsect,a4paper,dvipsnames]{llncs}
\usepackage[T1]{fontenc}

\usepackage{orcidlink}
\renewcommand{\orcidID}[1]{\orcidlink{#1}}

\usepackage{subcaption}
\usepackage{amsmath,amssymb,mathtools}

\usepackage[capitalize,nameinlink]{cleveref}

\spnewtheorem{myclaim}{Claim}[theorem]{\itshape}{\upshape}
\crefname{myclaim}{Claim}{Claims}
\spnewtheorem{observation}[theorem]{Observation}{\bfseries}{\upshape}
\crefname{observation}{Observation}{Observations}

\usepackage{xcolor}
\usepackage{enumitem}
\setlist[itemize]{itemsep=3pt,parsep=0pt,topsep=2pt}
\usepackage[ruled,linesnumbered,vlined]{algorithm2e}
\usepackage{tcolorbox}
\usepackage{booktabs}

\usepackage{complexity}
\usepackage{xcolor}
\definecolor{darkred}{rgb}{0.7,0,0}

\usepackage{CJKutf8}

\newcommand{\cutset}{\Gamma}

\newcommand{\bigOh}{\mathcal{O}}
\newcommand{\bigoh}{\mathcal{O}}

\newcommand{\weight}{\omega}
\newcommand{\symdif}{\mathbin{\triangle}}

\newcommand{\fgmincut}{\textsc{Forcing Global Min Cut}}
\newcommand{\afgmincut}{\textsc{Anti-Forcing Global Min Cut}}
\newcommand{\fminstcut}{\textsc{Forcing Min $s$--$t$ Cut}}
\newcommand{\afminstcut}{\textsc{Anti-Forcing Min $s$--$t$ Cut}}
\newcommand{\hgmc}{\textsc{Hitting Global Min Cut}}
\newcommand{\uta}{\textsc{Unweighted Tree Augmentation}}

\let\doendproof\endproof
\renewcommand\endproof{~\hfill$\qed$\doendproof}

\newenvironment{claimproof}{%
  \renewcommand\endproof{~\hfill$\lozenge$\doendproof}
  \begin{proof}%
}{%
  \end{proof}%
}

\title{On the Complexity of Forcing and Anti-Forcing Minimum Cuts\thanks{%
Partially supported
by JSPS KAKENHI Grant Numbers 
JP22H00513, % Yota: Ono Kiban A
JP23K28034, % Yasuaki: Kobayashi Kiban B
JP24H00686, % Yasuaki: Ito kiban A
JP24H00697, % Yasuaki, Yota: Tamaki Kiban A
JP24K23847, % Tatsuya: Gima Startup
JP25K03076, % Yota: Otachi Kiban B
JP25K03077, % Yota, Tatsuya: Hanaka Kiban B
JP26K02980, % Tatsuya, Yasuaki: Arimura Kiban B
JP26K21161. % Tatsuya: Gima Wakate
}}

\author{
Tatsuya Gima\inst{1}\orcidID{0000-0003-2815-5699} \and
Yasuaki Kobayashi\inst{1}\orcidID{0000-0003-3244-6915} \and
Hiraku Morimoto\inst{2}\orcidID{0009-0002-4646-4102} \and \\
Yota Otachi\inst{2}\orcidID{0000-0002-0087-853X}
}

\authorrunning{T. Gima et al.}
\institute{%
Hokkaido University, Sapporo, Japan.
\email{gima@ist.hokudai.ac.jp}, 
\email{koba@ist.hokudai.ac.jp} 
\and
Nagoya University, Nagoya, Japan.
\email{hiraku.morimoto@nagoya-u.jp},
\email{otachi@nagoya-u.jp}
}

\begin{document}

\maketitle

\begin{abstract}
For an instance of a combinatorial optimization problem, a \emph{forcing set} is a set of elements such that there is a unique optimal solution including it. Symmetrically, an \emph{anti-forcing set} is a set of elements such that there is a unique optimal solution excluding it. In this paper, we study the problems of computing smallest forcing and anti-forcing sets for two classical cut problems, \textsc{Global Min Cut} and \textsc{Min $s$--$t$ Cut}. We also consider variants in which the optimal cut to be uniquely determined is given as input. For each of these problems, we either give a polynomial-time algorithm or prove \NP-completeness.

% keywords are separated by \and commands
\keywords{
Anti-forcing set \and 
Forcing set  \and 
Minimum cut
}
\end{abstract}

\input{intro}

\input{preliminaries}

\input{gmincut}

\input{minstcut}

%\printbibliography

%%%%%%%%%%%%%%%%%%%%%%%%%%%%%%%%%%%%%%%%%%%%%%%%%%%%%%%%%%%%%%%%%%%%%%%%%%%%%%%%%%%%%%%%%%%%%%%%%%%%%%%%%%%%%%%%%%%%%%%%%%%%%%%%

% ---- Bibliography ----
\bibliographystyle{splncs04}
\bibliography{references}

%%%%%%%%%%%%%%%%%%%%%%%%%%%%%%%%%%%%%%%%%%%%%%%%%%%%%%%%%%%%%%%%%%%%%%%%%%%%%%%%%%%%%%%%%%%%%%%%%%%%%%%%%%%%%%%%%%%%%%%%%%%%%%%%

\end{document}

%% file: intro.tex
\section{Introduction}

For a combinatorial optimization problem, let $\mathcal X^* \subseteq 2^E$ be the set of all optimal solutions on an input with ground set $E$.
A \emph{forcing set} for $\mathcal X^*$ is a subset $F \subseteq E$ such that there is exactly one solution $X \in \mathcal X^*$ that includes it, that is, $F \subseteq X$ and $F \setminus Y \neq \emptyset$ for $Y \in \mathcal X^*$ with $Y \neq X$.
An \emph{anti-forcing set} for $\mathcal X^*$ is a subset $F \subseteq E$ such that there is exactly one solution $X \in \mathcal X^*$ that excludes it, that is, $F \subseteq E \setminus X$ and $F \cap Y \neq \emptyset$ for $Y \in \mathcal X^*$ with $Y \neq X$.
In other words, for a forcing set, there is a unique optimal solution including it, and for an anti-forcing set, there is a unique optimal solution excluding it.

Forcing sets and their ``dual'' concept, anti-forcing sets, were introduced by Harary, Klein, and \v{Z}ivkovi\'{c} \cite{HararyKZ91} and Vuki\v{c}evi\'{c} and Trinajsti\'{c}~\cite{vukicevic2007anti}, respectively.
In the original setting, $\mathcal X^*$ is considered the collection of all perfect matchings of a graph, and due to their important applications in computational chemistry, they have been investigated from graph-theoretic and computational perspectives~\cite{AdamsMM04,AfshaniHM04,aoshima2026hardness,HararyKZ91,vukicevic2007anti} (see \cite{ZhangHLZ25} for a survey).
Although the initial work primarily focused on perfect matchings, the concept can be naturally extended to various combinatorial (optimization) problems~\cite{AnCCKLOS26,DemaineMSWA16,gima2025forcing,HararySV07,HatamiM05,HoriyamaKOSS24,HoriyamaSSS25}.

The computational complexity of the problem of finding such a smallest ``fingerprint'' has been investigated under different names, such as defining sets~\cite{HatamiM05}, fewest clues~\cite{DemaineMSWA16}, and pre-assignments~\cite{AnCCKLOS26,HoriyamaKOSS24,HoriyamaSSS25}.
For perfect matchings, the authors of \cite{AdamsMM04,AfshaniHM04,aoshima2026hardness} showed that the decision versions of the problems of finding smallest forcing and anti-forcing sets are both \NP-complete even for restricted classes of graphs, while the problem of computing a perfect matching of a graph is solvable in polynomial time.
This phenomenon, in which the problem of finding a smallest fingerprint is harder than that of finding an optimal solution, has been observed in various problems, e.g., the decision problems of finding smallest forcing and anti-forcing sets for \textsc{Vertex Cover} are $\mathrm{\Sigma}^{\P}_2$-complete~\cite{HoriyamaKOSS24}.
Most closely related to our work, the authors of \cite{gima2025forcing} investigated the (parameterized) complexity of computing smallest forcing and anti-forcing sets for \textsc{Shortest $s$--$t$ Path} and \textsc{Minimum Spanning Tree}.
They showed that the decision version of the problem of finding a smallest anti-forcing set for \textsc{Shortest $s$--$t$ Path} is \NP-complete, whereas the other three problems are polynomial-time solvable. 
Thus, unlike the aforementioned examples, three of these four problems remain polynomial-time solvable.

Motivated by the results of \cite{gima2025forcing}, in this paper, we consider the problem of computing smallest forcing and anti-forcing sets for classical cut problems, \textsc{Global Min Cut} and \textsc{Min $s$--$t$ Cut}.
Let $G = (V, E, \weight)$ be an edge-weighted graph with a weight function $\weight \colon E \to \mathbb Z_{>0}$.
Let $\Lambda(G)$ (resp.~$\Lambda_{st}(G)$) be the set of all cutsets of minimum cuts (resp.~minimum $s$--$t$ cuts) of $G$.
A \emph{forcing set} for $\Lambda(G)$ (resp.~$\Lambda_{st}(G)$) is an edge subset $F \subseteq E$ such that there is a unique cutset $\cutset_{G}(X)$ in $\Lambda(G)$ (resp.~in $\Lambda_{st}(G)$) such that $F \subseteq \cutset_G(X)$.
An \emph{anti-forcing set} for $\Lambda(G)$ (resp.~$\Lambda_{st}(G)$) is an edge subset $F \subseteq E$ such that there is a unique cutset $\cutset_{G}(X)$ in $\Lambda(G)$ (resp.~in $\Lambda_{st}(G)$) such that $\cutset_G(X) \subseteq E(G) \setminus F$.
In this paper, we consider the following problems.

\begin{tcolorbox}
\begin{description}
    \item[Problem] \fgmincut{} / \afgmincut{}
    \item[Input: ] An edge-weighted graph $G = (V, E, \weight)$ and an integer $k$.
    \item[Goal: ] Determine if $G$ has a forcing set / an anti-forcing set $F \subseteq E$ for $\Lambda(G)$ with size at most $k$.
\end{description}
\end{tcolorbox}

\begin{tcolorbox}
\begin{description}
    \item[Problem] \fminstcut{} / \afminstcut{}
    \item[Input: ] An edge-weighted graph $G = (V, E, \weight)$, distinct vertices $s, t \in V$, and an integer $k$.
    \item[Goal: ] Determine if $G$ has a forcing set / an anti-forcing set $F \subseteq E$ for $\Lambda_{st}(G)$ with size at most $k$.
\end{description}
\end{tcolorbox}

We also consider natural variants of these problems, where we are additionally given a specific minimum cut (resp.~minimum $s$--$t$ cut) $\{X, V \setminus X\}$ and asked to determine whether $G$ has a forcing set $F \subseteq E$ for $\Lambda(G)$ (resp.~$\Lambda_{st}(G)$) with size at most $k$ such that $F \subseteq \cutset_G(X)$.
We refer to these problems as \fgmincut{} and \fminstcut{} \emph{with a specified cut}, respectively.
Similarly, \afgmincut{} and \afminstcut{} \emph{with a specified cut} are defined as well.

\paragraph{Our results.} We prove that \fgmincut{} is polynomial-time solvable and \afgmincut{} is \NP-complete.
In particular, by exploiting cactus representations of minimum cuts~\cite{DinitzKL76}, we show that every global minimum cut admits a forcing set of size at most 2.
Based on this structural result, we give a near-linear-time algorithm for \fgmincut{}.
These positive and negative results remain unchanged when a target minimum cut, which is to be unique, is given as input.
For \fminstcut{} and \afminstcut{}, a tractability boundary can be drawn ``orthogonally'' to the previous cases: \fminstcut{} and \afminstcut{} are both \NP-complete, while they can be solved in polynomial time when a target minimum $s$--$t$ cut is given as input.
These hardness and algorithmic results are shown by means of a representation theorem for minimum $s$--$t$ cuts due to Picard and Queyranne~\cite{PicardQ80}.
Our results are summarized in \cref{tab:results}.
\begin{table}[t]
    \centering
    \caption{A summary of our results.}\label{tab:results}
    \begin{minipage}[t]{0.47\linewidth}
        \centering
        \caption*{\textsc{Global Min Cut}}
        \begin{tabular}{lcc}
            \toprule
             & \ Forcing \ & Anti-forcing \\
            \midrule
            Cut not specified & \cP & \NP-c. \\
            Specified cut     & \cP & \NP-c. \\
            \bottomrule
        \end{tabular}
    \end{minipage}
    \hfill
    \begin{minipage}[t]{0.47\linewidth}
        \centering
        \caption*{\textsc{Min $s$--$t$ Cut}}
        \begin{tabular}{lcc}
            \toprule
             & \ Forcing \ & Anti-forcing \\
            \midrule
            Cut not specified & \NP-c. & \NP-c. \\
            Specified cut     & \cP & \cP \\
            \bottomrule
        \end{tabular}
    \end{minipage}
\end{table}

%% file: preliminaries.tex
\section{Preliminaries}

Let $G$ be a (directed) graph.
The vertex set and edge set of $G$ are denoted by $V(G)$ and $E(G)$, respectively.
For $X \subseteq V(G)$, we denote by $G[X]$ the subgraph of $G$ induced by $X$.
For disjoint vertex sets $X$ and $Y$, the set of edges between $X$ and $Y$ is denoted by $E_G(X, Y)$.
For a weight function $\weight$ on edges, we define $\weight(X, Y) = \sum_{e \in E_G(X, Y)} \weight(e)$.
For an undirected graph, we may write $uv$ to denote an edge between two vertices $u$ and $v$.
For $F \subseteq E(G)$, the graph obtained from $G$ by deleting all edges in $F$ is denoted by $G - F$.

A \emph{cut} of $G$ is a bipartition $\{X,V(G)\setminus X\}$ into two
nonempty sets, and its \emph{cutset} is $\Gamma_G(X) \coloneqq E_G(X,V(G)\setminus X)$.
The sets $X$ and $V(G)\setminus X$ are called the \emph{shores} of the cut. For an $s$--$t$ cut, the shore containing $s$ is called the \emph{$s$-shore}.
The \emph{weight} of a cut $\{X, V\setminus X\}$ is defined as $\weight(X, V(G) \setminus X)$.
We denote by $\lambda(G)$ and by $\lambda_{st}(G)$ the minimum weight of a cut and an $s$--$t$ cut of $G$, respectively.
Define
\begin{align*}
    \Lambda(G) &\coloneqq \{\cutset_{G}(X) : \emptyset \neq X \subsetneq V(G),\, \weight(X, V(G) \setminus X) = \lambda(G)\},\\
    \Lambda_{st}(G) &\coloneqq \{\cutset_{G}(X) : s \in X,\, t \notin X,\, \weight(X, V(G)\setminus X) = \lambda_{st}(G)\}. 
\end{align*}

While the cutset corresponding to a cut $\{X, V(G) \setminus X\}$ is uniquely determined as $\cutset_G(X)$, from a cutset $S \subseteq E(G)$, the corresponding cut $\{X, V(G) \setminus X\}$ with $S = \cutset_G(X)$ may not be uniquely determined.
However, when $G$ is connected, $\weight$ is positive, and $S$ is the cutset of a minimum ($s$--$t$) cut of $G$, the corresponding cut is uniquely determined as $G - S$ has exactly two connected components.
Moreover, when $G$ is disconnected, our problems are trivial or reduced to the case where $G$ is connected.
Thus, throughout this paper, we may assume that $G$ is connected and $\weight$ is positive.

We next describe a transformation for removing edge weights and parallel edges, which will be useful in our later reductions.
\begin{lemma}
    \label{lem:transform-unweighted-simple}
    Let $G$ be an edge-weighted multigraph.
    Then $G$ can be transformed into an unweighted simple graph $G'$ such that the minimum cuts of $G$ and $G'$ are in one-to-one correspondence.
    The same holds for minimum $s$--$t$ cuts.
\end{lemma}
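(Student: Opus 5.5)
The plan is a two-stage gadget construction: first remove weights by splitting each edge into parallel copies, then remove parallel edges by subdividing and attaching a large clique to every new vertex. The correspondence only needs to preserve the set of minimum cuts, so throughout I will track the weights of cuts that separate original vertices and show every other cut is strictly heavier.

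\textbf{Removing weights.} I replace each edge $uv$ of weight $\weight(uv)$ by $\weight(uv)$ parallel unit edges. This gives an unweighted multigraph $G_1$ with $V(G_1)=V(G)$. For every $X$, the number of edges of $\Gamma_{G_1}(X)$ equals $\weight(X, V\setminus X)$. Hence minimum cuts and minimum $s$--$t$ cuts of $G$ and $G_1$ are literally the same bipartitions.

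\textbf{Removing parallel edges.} Let $W$ be an integer with $W > \lambda(G_1)$, say $W = \sum_e \weight(e)+1$; for the $s$--$t$ version use $W > \lambda_{st}$. I subdivide every edge of $G_1$ by a new vertex $x_e$. I then attach to each $x_e$ a private clique $K_e$ on $W$ new vertices, with $x_e$ adjacent to all of $K_e$. Call the resulting simple graph $G'$.

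Given a cut $X$ of $G$, I extend it to $G'$ as follows.
\begin{itemize}
\item For an edge $e$ inside a shore, $x_e$ and $K_e$ go to that shore, so $e$ contributes $0$.
\item For a crossing edge $e=uv$, $x_e$ and $K_e$ go to either side, so $e$ contributes exactly $1$ (one of $ux_e$, $x_ev$).
\end{itemize}
Thus the extended cut has the same weight as $X$ in $G_1$. Conversely, consider any cut of $G'$.
\begin{itemize}
\item If some clique $K_e\cup\{x_e\}$ is split, the cut has weight at least $W$ (a split $K_{W+1}$ has at least $W$ cut edges), which exceeds the minimum.
\item If no such clique is split but all original vertices lie on one shore, then some whole block $\{x_e\}\cup K_e$ is separated. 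This costs at least $2$ (its two subdivision edges), and in fact I need it to exceed $\lambda$.
\end{itemize}

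\textbf{Main obstacle.} The second case above is the delicate one, because the trivial cut around a gadget has weight only $2$, which can be below $\lambda$. To fix it, I make the attachment heavy on both ends: instead of a single subdivision vertex per copy, I join $u$ and $v$ through the gadget by $W$ edges each. Concretely, $x_e$ is replaced by the clique $K_e$, and every vertex of $K_e$ is adjacent to $u$. Now separating $K_e$ from both endpoints costs at least $W>\lambda$. Moreover, $K_e$ must lie with $u$, or else the edges $K_e$--$u$ cost $W$. The crossing cost of $e$ is then carried by the single edge from one designated vertex $y_e\in K_e$ to $v$. This edge is simple because $y_e$ is private to the copy $e$.

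With this modification, I will verify the following. In any cut of weight at most $\lambda$, each $K_e$ lies with its $u$ and no clique is split. Every new vertex therefore goes with an original vertex, so both shores contain original vertices. The weight equals the number of copies $e$ whose $u$ and $v$ are separated. This gives a bijection between minimum cuts of $G$ and of $G'$. For $s$--$t$ cuts the same argument applies verbatim, with $s,t$ kept as original vertices and $W>\lambda_{st}(G)$. The size blow-up is pseudo-polynomial in the weights, which I will note suffices when the weights are polynomially bounded (as in the reductions where the lemma is used).
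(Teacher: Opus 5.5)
Your final gadget is correct, but it is not the paper's construction. The paper uses the same first step (weights to parallel edges). It then blows up each \emph{vertex} $v$ into a clique $K_v$ on $m+2$ vertices and routes each parallel copy as one edge between distinct vertices of $K_u$ and $K_v$. A cut splitting some $K_v$ costs at least $m+1$, while every clique-respecting cut costs at most $m$. For $s$--$t$ cuts, new terminals are picked inside $K_s$ and $K_t$.

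You instead leave the original vertices untouched and attach a private gadget to each parallel \emph{edge} $e=uv$: a clique $K_e$ on $W>\lambda$ vertices fully joined to $u$, plus the single edge $y_ev$. Since $K_e\cup\{u\}$ is a clique on $W+1$ vertices, any cut splitting it costs at least $W>\lambda$. Hence in every minimum cut $K_e$ follows $u$, both shores contain original vertices, and only $y_ev$ records whether $e$ crosses. This gives the bijection, with $s,t$ kept verbatim as terminals.

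Both arguments rest on the same principle: unsplittable heavy cliques pin new vertices to original ones, and each unit edge contributes exactly one cut edge. Both also deliver the by-product the paper uses afterwards. Each original unit edge has a single representative ($y_ev$ for you, the inter-clique edge in the paper), and all gadget edges lie in no minimum cut, so forcing and anti-forcing set sizes are preserved. Your version avoids re-choosing terminals, and the gadget size need only exceed $\lambda$ (or $\lambda_{st}$) rather than depend on $m$. The paper's version is shorter to state and needs no per-edge gadget.

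When you write it up, present only the final gadget and state explicitly that $K_e\cup\{u\}$ is a $K_{W+1}$; that single fact gives both ``$K_e$ is not split'' and ``$K_e$ lies with $u$''. The abandoned first gadget (subdivision vertex with a pendant clique) fails for more than the weight-$2$ cut around a block. For a crossing copy, the block $\{x_e\}\cup K_e$ could go to either side at cost $1$, so each minimum cut of $G$ would have $2^{\lambda}$ preimages. Also drop the phrase ``by $W$ edges each'': joining $K_e$ to $v$ by $W$ edges would make every crossing copy cost $W$, whereas the single edge $y_ev$ is what the argument needs.
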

\begin{proof}
    First, replace each edge $e$ with $\weight(e)$ parallel unweighted edges.
    This preserves the value and the shores of every cut.
    Let $H$ be the resulting unweighted multigraph and let $m=|E(H)|$.
    Replace each vertex $v$ by a clique $K_v$ of size $m+2$, and represent each edge $uv$ of $H$ by an edge between distinct vertices of $K_u$ and $K_v$.
    Any cut splitting a clique has size at least $m+1$, whereas every cut corresponding to a cut of $H$ has size at most $m$.
    Hence, no minimum cut splits a clique, and the claimed correspondence follows.
    
    For minimum $s$--$t$ cuts, apply the same construction and choose one vertex of $K_s$ and one vertex of $K_t$ as the new terminals $s$ and $t$, which proves the latter claim.
\end{proof}
Note that if all edge weights are polynomially bounded, then the transformation can be performed in polynomial time.
These transformations also preserve the minimum cardinalities of forcing and anti-forcing sets.
Indeed, in the first transformation, each original edge in such a set can be replaced by an arbitrary edge in the corresponding parallel bundle, and in the second transformation by its corresponding edge between the cliques.
Conversely, the transformation can be reversed by mapping these edges back to the original edges and discarding clique edges, which belong to no minimum cut.

%% file: gmincut.tex
\section{\fgmincut{} and \afgmincut{}}

We start this section with a well-known \emph{cactus representation} of minimum cuts, which is a key technical ingredient for our near-linear-time algorithm for \fgmincut{}.

\subsection{Representing all minimum cuts}

We call a multigraph $\mathcal{K}$ a \emph{cactus} if every edge of $\mathcal{K}$ belongs to exactly one cycle.\footnote{Although the standard definition of cacti allows bridges, we forbid them here and allow cycles of length~$2$ instead.}
Let $G$ be a connected edge-weighted graph, let $\mathcal{K}$ be a connected unweighted cactus, and let $\varphi \colon V(G)\to V(\mathcal{K})$ be a mapping.
We say that $(\mathcal{K},\varphi)$ is a \emph{cactus representation} of $G$ if the following conditions hold:
\begin{enumerate}
  \item for every $\cutset_G(X) \in\Lambda(G)$, there exists a minimum cut
        $\{Y, V(\mathcal{K})\setminus Y\}$ of $\mathcal{K}$ such that $\{X, V(G)\setminus X\}=\{\varphi^{-1}(Y),V(G)\setminus\varphi^{-1}(Y)\}$; and
  \item for every minimum cut $\{Y, V(\mathcal{K})\setminus Y\}$ of $\mathcal{K}$, we have
        $\cutset_G(\varphi^{-1}(Y))\in\Lambda(G)$.
\end{enumerate}
We emphasize that $G$ is edge-weighted, whereas the cactus $\mathcal{K}$ is unweighted.
See \cref{fig:cactus-representation} for an example of cactus representation.
\begin{figure}[t]
  \centering
  \includegraphics[width=0.8\linewidth]{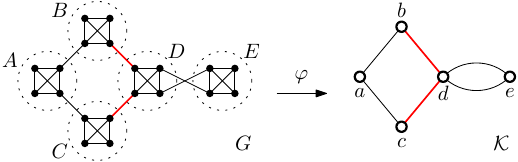}
  \caption{A graph $G$ and a cactus representation $(\mathcal{K}, \varphi)$ of $G$,
  where $\varphi$ maps $A, B, C, D, E$ to $a, b, c, d, e$, respectively.
  The two red edges of $\mathcal{K}$ represent the minimum cut $\{A\cup B\cup C, D\cup E\}$ of $G$.}
  \label{fig:cactus-representation}
\end{figure}
A minimum cut of a cactus is obtained by deleting two edges $f_1$ and $f_2$ on a common cycle and taking the vertex set of either resulting component.
We say that $f_1$ and $f_2$ \emph{represent} this cut (or its cutset).
A vertex $v$ in $\mathcal{K}$ is \emph{empty} if $\varphi^{-1}(v)$ is empty and 
a \emph{$k$-junction} if $v$ is contained in exactly $k$ cycles of $\mathcal{K}$.
Dinitz et al.~\cite{DinitzKL76} proved that every connected edge-weighted graph $G$ admits a cactus representation.
Moreover, such a representation can be constructed in $\bigoh(m\,\mathrm{polylog}\,n)$ time~\cite{HeHS24a,HenzingerLRW24}, where $n$ and $m$ are the numbers of vertices and edges in $G$.

\subsection{\fgmincut{} in near-linear time}
First, we show that \fgmincut{} can be solved in near-linear time, that is, $\bigoh(m\,\mathrm{polylog}\,n)$ time.
Before presenting the algorithm, we establish several properties of cactus representations that will be useful in the proof.
Let $G=(V,E,\weight)$ be an edge-weighted graph, and let $(\mathcal{K},\varphi)$ be a cactus representation of $G$.
We denote $\lambda:=\lambda(G)$.
We only consider the case $\lambda>0$, that is, $G$ is connected.
For a vertex $v$ on a cycle $C$ of $\mathcal{K}$, let $e$ and $f$ be the two edges of $C$ incident with $v$.
We denote by $\mathcal{K}[C,v]$ the connected component of $\mathcal{K}-\{e,f\}$ that contains $v$.

Lo, Schmidt, and Thorup~\cite{LoST21} stated the following lemma for unweighted graphs, but their proof works for edge-weighted graphs simply by replacing edge-set sizes with their total weights.
\begin{lemma}[{\cite[Lemma~5]{LoST21}}]
  \label{lem:fgm:half}
  Let $u$ and $v$ be two distinct vertices on a cycle $C$ of length at least $3$ in $\mathcal{K}$.
  If $u$ and $v$ are adjacent on $C$, then the total weight of the edges in $G$ between
  $\varphi^{-1}(\mathcal{K}[C,u])$ and $\varphi^{-1}(\mathcal{K}[C,v])$ is exactly $\lambda/2$.
  On the other hand, if $u$ and $v$ are not adjacent on $C$, then there is no edge in $G$ between
  $\varphi^{-1}(\mathcal{K}[C,u])$ and $\varphi^{-1}(\mathcal{K}[C,v])$.
\end{lemma}
% \begin{proof}
%   Suppose $u$ and $v$ are adjacent in $C$.
%   Let $X_1 := \varphi^{-1}(\mathcal{K}[C, u])$, $X_2 := \varphi^{-1}(\mathcal{K}[C, v])$, and $X_3 := V \setminus (X_1 \cup X_2)$.
%   For each $i\in \{1, 2, 3\}$, the set $X_i$ is the inverse image under $\varphi$ of a shore of a minimum cut of $\mathcal{K}$.
%   Hence, $\cutset_G(X_i)\in \Lambda(G)$.
%   This implies that $\weight(X_1, X_2)+\weight(X_1, X_3) = \weight(X_2, X_3)+\weight(X_2, X_1) = \weight(X_3, X_1)+\weight(X_3, X_2) = \lambda$.
%   Hence, $\weight(X_1, X_2) = \lambda/2$.

%   If $u$ and $v$ are not adjacent in $C$, then there exist vertices $w_1, w_2\neq v$ adjacent to $u$ in $C$.
%   Let $X_1 := \varphi^{-1}(\mathcal{K}[C, u])$, $X_2 := \varphi^{-1}(\mathcal{K}[C, v])$, and $X_{i+2} := \varphi^{-1}(\mathcal{K}[C, w_i])$ for $i=1, 2$.
%   As $w_i$ is adjacent to $u$, we have that $\weight(X_1, X_3) = \weight(X_1, X_4) = \lambda/2$.
%   Since $\lambda = \weight(X_1)\ge \sum_{i=2, 3, 4}\weight(X_1, X_i) = \lambda+\weight(X_1, X_2)$, we have $\weight(X_1, X_2) = 0$.
% \end{proof}

\begin{lemma}
  \label{lem:fgm:path}
  Let $uv \in E(G)$, and let $P$ be a shortest $\varphi(u)$--$\varphi(v)$ path in $\mathcal{K}$. 
  Let $C$ be a cycle of $\mathcal{K}$ with length at least $3$.
  If there is an edge $xy \in E(C)$ such that $\varphi(u) \in \mathcal{K}[C,x]$ and $\varphi(v) \in \mathcal{K}[C,y]$, then $E(P) \cap E(C) = \{xy\}$.
  Moreover, if the cutset of a minimum cut of $G$ containing $uv$ is represented by two edges of a cycle of $\mathcal{K}$, then one of these two edges belongs to $P$.
\end{lemma}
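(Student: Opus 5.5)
The first claim is pure cactus structure. The second comes from it once we note that an edge crossing a minimum cut of $G$ forces a path in $\mathcal{K}$ to cross the corresponding two-edge cut.

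\textbf{First claim.} For $w \in V(C)$, the subcacti $\mathcal{K}[C,w]$ partition $V(\mathcal{K})$. Since every edge of $\mathcal{K}$ lies in exactly one cycle, any path in $\mathcal{K}$ between different subcacti $\mathcal{K}[C,x]$ and $\mathcal{K}[C,y]$ must enter and leave them through the cycle $C$. Concretely, contracting each $\mathcal{K}[C,w]$ to a single vertex turns $\mathcal{K}$ into the cycle $C$ itself. Also, edges of $C$ are the only edges joining different subcacti.

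The shortest $\varphi(u)$--$\varphi(v)$ path $P$ therefore decomposes into three pieces:
\begin{itemize}
\item a shortest path inside $\mathcal{K}[C,x]$ from $\varphi(u)$ to $x$;
\item a segment along $C$ from $x$ to $y$;
\item a shortest path inside $\mathcal{K}[C,y]$ from $y$ to $\varphi(v)$.
\end{itemize}
A shortest path never revisits a subcactus, because revisiting one would create a detour that can be shortcut inside that subcactus.

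The two segments of $C$ from $x$ to $y$ are the edge $xy$ (length $1$) and the rest of $C$ (length $|C|-1 \ge 2$, as $|C|\ge 3$). Hence a shortest path uses exactly the edge $xy$ from $C$, giving $E(P)\cap E(C)=\{xy\}$.

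One subtlety is that when $|C|=3$ the long segment has length $2$, so minimality alone suffices without any tie-breaking. No appeal to \cref{lem:fgm:half} is needed here. That lemma would only be needed to guarantee that some edge $xy$ exists when $uv$ crosses between subcacti; here existence is a hypothesis.

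\textbf{Second claim.} Suppose $uv \in \cutset_G(X)$ with $\cutset_G(X)\in\Lambda(G)$. By condition~1 of the cactus representation, $X=\varphi^{-1}(Y)$ (up to complement) for some minimum cut $\{Y,V(\mathcal{K})\setminus Y\}$ of $\mathcal{K}$. Let this cut be represented by edges $f_1,f_2$ of a cycle $C$.

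Because $uv$ crosses the cut, $\varphi(u)$ and $\varphi(v)$ lie on different sides of $\{Y,V(\mathcal{K})\setminus Y\}$. The only edges of $\mathcal{K}$ crossing this cut are $f_1$ and $f_2$, so every $\varphi(u)$--$\varphi(v)$ path in $\mathcal{K}$, in particular $P$, contains $f_1$ or $f_2$. This argument needs neither $|C|\ge 3$ nor the first part, which handles the case $|C|=2$ uniformly.

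If the intended reading is instead that the cut is represented by some two edges of an arbitrary cycle, we choose that representation; the argument is the same.

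\textbf{Main obstacle.} The only delicate step is justifying rigorously that a shortest path cannot leave a subcactus $\mathcal{K}[C,w]$ and later re-enter it. The plan is to argue this by contradiction: the portion of the path between leaving and re-entering starts and ends at vertices of the same subcactus, and that portion can be replaced by a path inside the connected subgraph $\mathcal{K}[C,w]$. Since the exit and entry points are both $w$ (the unique vertex of $\mathcal{K}[C,w]$ on $C$), this replacement is strictly shorter, contradicting that $P$ is shortest.
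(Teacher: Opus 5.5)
Your proof is correct and follows the paper's argument. In the first part, the cactus structure forces $P$ to meet $C$ in a single $x$--$y$ arc, and minimality makes that arc the edge $xy$. In the second part, $\varphi(u)$ and $\varphi(v)$ lie on opposite sides of the cactus cut, and its only crossing edges are the two representing edges. Your ``main obstacle'' is easier than you make it: leaving and re-entering $\mathcal{K}[C,w]$ would make $P$ visit $w$ twice, which a path cannot do, so no shortcutting argument is needed.
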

\begin{proof}
  Let $C$ be a cycle of $\mathcal{K}$ of length at least $3$, and let $xy \in E(C)$ satisfy $\varphi(u) \in \mathcal{K}[C,x]$ and $\varphi(v) \in \mathcal{K}[C,y]$. 
  Since $\mathcal{K}$ is a cactus, $E(P)\cap E(C)$ is the $x$--$y$ subpath of $P$. 
  As $P$ is shortest and $xy\in E(C)$, this $x$--$y$ subpath is the single edge $xy$. 
  Thus, $E(P)\cap E(C)=\{xy\}$.
  
  Let $f_1$ and $f_2$ be two edges of a cycle $C$ that represent a minimum cut containing $uv$.
  Then $\varphi(u)$ and $\varphi(v)$ lie in different components of $\mathcal{K}-\{f_1,f_2\}$.
  Therefore, every $\varphi(u)$--$\varphi(v)$ path, and in particular $P$, contains at least one of $f_1$ and $f_2$.
\end{proof}

The following lemma allows us to reduce redundant vertices in a cactus representation.

\begin{lemma}
  \label{lem:fgm:reduce}
  Let $C$ be a $2$-cycle with $V(C)=\{z,u\}$ and let $z$ be an empty $2$-junction.
  Then identifying $z$ and $u$ and deleting the two edges of $C$ yields a cactus representation of $G$.
\end{lemma}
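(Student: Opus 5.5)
We write $\mathcal{K}'$ for the multigraph obtained from $\mathcal{K}$ by identifying $z$ with $u$ into a single vertex $w$ and deleting the two edges of $C$. We set $\varphi'(x)=w$ if $\varphi(x)=u$, and $\varphi'(x)=\varphi(x)$ otherwise; this is well defined because $z$ is empty. The plan is to verify two things: that $\mathcal{K}'$ is a connected cactus, and that there is a bijection between the minimum cuts of $\mathcal{K}$ and those of $\mathcal{K}'$ which preserves preimages under $\varphi$ and $\varphi'$. The two conditions of a cactus representation then transfer directly from $(\mathcal{K},\varphi)$ to $(\mathcal{K}',\varphi')$.

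\emph{Structure of $\mathcal{K}'$.} Since $z$ is a $2$-junction, it lies on exactly two cycles: $C$ and one further cycle $D$, where $D$ does not contain $u$. Every edge other than those of $C$ lies on exactly one cycle of $\mathcal{K}$, and deleting $C$ while merging $z$ and $u$ sends each such cycle to a cycle of $\mathcal{K}'$; the cycle $D$ simply now passes through $w$ instead of $z$. No new cycles arise. Suppose a new cycle existed in $\mathcal{K}'$. It would use $w$ and lift to a $z$--$u$ path in $\mathcal{K}-E(C)$. But $C$ consists of two parallel edges forming a block, so any $z$--$u$ path avoiding $E(C)$ would put those edges on a second cycle, contradicting the cactus property. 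Connectivity is preserved because $z$ and $u$ were joined only through $C$ and are now the same vertex. Hence every edge of $\mathcal{K}'$ lies on exactly one cycle.

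\emph{Correspondence of minimum cuts.} The minimum cut value of a bridgeless cactus is $2$. The minimum cuts of $\mathcal{K}$ are given by pairs of edges on a common cycle. The cut represented by the two edges of $C$ has shore $\{z\}\cup(\text{the side of } z)$ on one side and the side of $u$ on the other. Because $z$ is a $2$-junction, deleting $E(C)$ separates $\mathcal{K}$ into two parts: the part containing $u$, and $\mathcal{K}[C,z]$, which equals $\{z\}$ together with the components hanging off $D$. This cut therefore coincides, as a vertex bipartition, with the cut represented by the two edges of $D$ incident with $z$, since both separate exactly the set $\{z\}$. More precisely, the shores of the two cuts differ only by the empty vertex $z$, so their preimages under $\varphi$ are identical.

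Every other pair of edges lies on a cycle other than $C$. Such a pair survives in $\mathcal{K}'$, and its components in $\mathcal{K}'$ are obtained from those in $\mathcal{K}$ by merging $z$ with $u$. These two vertices lie on the same side of the cut: a pair of edges from a cycle other than $C$ cannot separate $z$ from $u$, because $C$ still connects them. So $\varphi'^{-1}$ of the new shore equals $\varphi^{-1}$ of the old shore. Conversely, each minimum cut of $\mathcal{K}'$ is represented by two edges of a cycle of $\mathcal{K}'$ and lifts back in the same way. Thus the sets of preimage bipartitions coincide, which proves both conditions.

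\emph{Main obstacle.} The delicate step is the cut represented by $C$ itself, since it disappears in $\mathcal{K}'$. It has to be shown to be duplicated by a cut of $D$, which is where emptiness and the $2$-junction hypothesis are used. A secondary point is the careful check that $\mathcal{K}'$ acquires no edge lying on two cycles; this relies on the fact that in a cactus the $2$-cycle $C$ is the only connection between $z$ and $u$.
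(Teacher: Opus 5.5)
Your proof is correct and follows essentially the paper's argument. The cut represented by $C$ is duplicated, up to the empty vertex $z$, by the cut represented by the two edges of the other cycle at $z$. Every cut represented by another cycle keeps $z$ and $u$ on the same side, so the identification preserves its preimage. You additionally verify that $\mathcal{K}'$ is a connected cactus and that its minimum cuts lift back, which the paper leaves implicit.

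One small slip: the sentence claiming the two cuts ``coincide, as a vertex bipartition'' is false, since the bipartitions differ by moving $z$ from one side to the other. Your following ``more precisely'' sentence states this correctly, so the earlier sentence should simply be dropped.
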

\begin{proof}
  Let $\pi\colon V(\mathcal{K})\to V(\mathcal{K}')$ be the natural projection onto the cactus $\mathcal{K}'$ obtained by the stated identification, and put $\varphi':=\pi\circ\varphi$.
  Let $C'$ be the other cycle containing $z$, and let $e_1,e_2$ be the two edges of $C'$ incident with $z$.
  The cut represented by the two edges of $C$ and the cut represented by $\{e_1,e_2\}$ differ only in the placement of $z$.
  Since $\varphi^{-1}(z)=\emptyset$, they induce the same cut of $G$.
  Therefore, the cut represented by $C$ is redundant.
  For every cut represented by a cycle different from $C$, the vertices $z$ and $u$ lie on the same side, and hence identifying them does not change its inverse image.
  Thus, $(\mathcal{K}',\varphi')$ is a cactus representation of $G$.
\end{proof}
A cactus representation $(\mathcal{K},\varphi)$ is \emph{reduced}
if $\mathcal{K}$ contains no $2$-cycle having an empty $2$-junction as an endpoint.
By \cref{lem:fgm:reduce}, every cactus representation can
be transformed into an equivalent reduced one.
In what follows, we assume that all cactus representations are reduced.

Fix a cycle $C$ in $\mathcal{K}$.
For $v \in V(C)$, let $A_v \coloneqq \varphi^{-1}(\mathcal{K}[C,v])$.
Note that $A_v$ is nonempty for every $v \in V(C)$.
This can be seen as follows: the two edges of $C$ incident with $v$ represent a minimum cut of $G$ having $A_v$ as one of its shores, which is nonempty as $\lambda > 0$.
For each edge $e=\{u,v\}\in E(G)$, fix a shortest $\varphi(u)$--$\varphi(v)$ path $P_e$ in $\mathcal{K}$.
\begin{lemma}
  \label{lem:fgm:two}
  Every minimum cut of $G$ has a forcing set of size at most~$2$.
\end{lemma}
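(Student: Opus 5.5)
Fix a minimum cut $\{X,V\setminus X\}$ of $G$ and let $f_1=x_1y_1$ and $f_2=x_2y_2$ be two edges of a cycle $C$ of the reduced cactus $\mathcal{K}$ that represent it, labelled so that $x_1,x_2$ lie on one arc of $C-\{f_1,f_2\}$ (the arc whose $\varphi$-preimage is $X$) and $y_1,y_2$ on the other. The plan is to exhibit one edge of $G$ ``at'' $f_1$ and one ``at'' $f_2$, and to show that the pair is a forcing set.

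If $|C|\ge 3$, \cref{lem:fgm:half} gives $\weight$-positive edges of $G$ between $A_{x_1}$ and $A_{y_1}$, and likewise between $A_{x_2}$ and $A_{y_2}$. Choose $e_1$ and $e_2$ among these. If $|C|=2$, then $f_1,f_2$ are the two parallel edges of $C$, the cut is $\{A_{x_1},A_{y_1}\}$, and I take $e_1$ to be any edge across it; the second edge is chosen as in the obstacle step below. By construction $\{e_1,e_2\}\subseteq\cutset_G(X)$.

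Next, let $\cutset_G(Y)\in\Lambda(G)$ contain $e_1$ and $e_2$, and let it be represented by edges $g_1,g_2$ of a cycle $D$. By \cref{lem:fgm:path}, $\{g_1,g_2\}$ meets $P_{e_1}$ and meets $P_{e_2}$. Moreover, $E(P_{e_i})\cap E(C)=\{f_i\}$ when $|C|\ge 3$. I split into two cases.

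\emph{Case $D=C$.} Deleting two edges of $C$ separates the adjacent vertices $x_i,y_i$ if and only if one of the deleted edges is $f_i$. Since the endpoints of $e_i$ lie in $A_{x_i}$ and $A_{y_i}$, this forces $\{g_1,g_2\}=\{f_1,f_2\}$.

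\emph{Case $D\neq C$.} Because $\mathcal{K}$ is a cactus, $D$ lies inside a single $\mathcal{K}[C,w]$. Its cut then has one shore $S\subsetneq A_w$, and the other shore contains $\varphi^{-1}$ of all of $C-w$. For $e_i$ to cross this cut, $w$ must be an endpoint of $f_i$. So if $f_1$ and $f_2$ share no vertex, no single $w$ works for both $e_1$ and $e_2$, which is a contradiction. Hence $\{e_1,e_2\}$ is forcing.

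The main obstacle is the case where $f_1$ and $f_2$ share a vertex $v$, which includes the $2$-cycle case and the case $x_1=x_2=v$. Here the cut is $\{A_v,V\setminus A_v\}$, and a deeper cut with shore $S\subsetneq A_v$ might contain both $e_1$ and $e_2$. An uncrossing computation shows this is genuinely possible: it gives $\weight(S,A_v\setminus S)=\weight(S,V\setminus A_v)=\lambda/2$. So $e_2$ must be chosen depending on $e_1$.

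My approach is to fix $e_1$ with endpoint $a\in A_v$, and to look at the first cycle $D_0\neq C$ through $v$ on the path $P_{e_1}$ inside $\mathcal{K}[C,v]$. If $\varphi^{-1}(v)\neq\emptyset$ or no such cycle exists, I pick $e_2$ with endpoint in $\varphi^{-1}(v)$ or outside $\mathcal{K}[C,v]$'s deeper parts. Otherwise $v$ is a junction; since the representation is reduced and $v$ is empty, $v$ is not an empty $2$-junction on a $2$-cycle. I then apply \cref{lem:fgm:half} on $C$ and on $D_0$ to choose $e_2$ between $A_{y_2}$ and a part of $A_v$ whose path to $v$ in $\mathcal{K}$ avoids $P_{e_1}\cap\mathcal{K}[C,v]$. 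This ensures that no pair $g_1,g_2$ inside $\mathcal{K}[C,v]$ can meet both $P_{e_1}$ and $P_{e_2}$ while separating both edges. Verifying that such an $e_2$ always exists, using the weight equalities $\lambda/2$ around $v$, is the step I expect to need the most care.
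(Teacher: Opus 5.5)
Your argument for the case $|C|\ge 3$ with $f_1,f_2$ disjoint is correct. Observing directly that a cycle $D\neq C$ lies in a single $\mathcal{K}[C,w]$, and can only separate edges with an endpoint in $A_w$, is a clean substitute for the paper's use of \cref{lem:fgm:path}. However, the case you call the main obstacle is only outlined. That case ($f_1,f_2$ sharing a vertex) covers the paper's Case~2 and the whole $2$-cycle case, and it carries essentially all of the difficulty. You explicitly leave the existence of $e_2$ unchecked, so the lemma is not proved. The outline also has three concrete defects.

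\emph{First defect.} The rule ``if $\varphi^{-1}(v)\neq\emptyset$, take $e_2$ with an endpoint in $\varphi^{-1}(v)$'' is unjustified. $\varphi^{-1}(v)$ can be nonempty while all its edges go into the branches hanging at $v$, so no edge of $\cutset_G(X)$ touches it; your fallback then assumes $v$ is empty, which it is not. The split must be on whether some cut edge is incident to $\varphi^{-1}(v)$. If none is, look at which branches $B_i$ at $v$ the two sides reach. In the funnel case, where every cut edge enters the same branch $B_{i^\ast}$, one must prove $A_v=B_{i^\ast}$, i.e.\ that $v$ is an empty $2$-junction. This follows from $\weight(B_{i^\ast},V\setminus B_{i^\ast})=\lambda$ and the connectivity of $G$.

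\emph{Second defect.} In that funnel case your intended invariant fails. Both $P_{e_1}$ and $P_{e_2}$ must pass through $D_0=C_{i^\ast}$, and the two edges of $D_0$ at $v$ separate both $e_1$ and $e_2$. The argument survives only because this pair represents the same cut (as $A_v=B_{i^\ast}$), and because one can choose $e_1,e_2$ entering $D_0$ through the two distinct neighbors $p,q$ of $v$. The existence of such a crosswise pair comes from \cref{lem:fgm:half} applied to $C$ and to $D_0$; $D_0$ has length $\ge 3$ only because the representation is reduced. Around a junction with several branches, no $\lambda/2$ equalities are available. For a deeper shore $S$ (your notation), your uncrossing gives only $\weight(S,V\setminus A_v)\le\lambda/2$, with equality iff $A_v\setminus S$ is itself a min-cut shore.

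\emph{Third defect.} When $C$ is a $2$-cycle, both endpoints $x,y$ are shared, so a competing cycle may lie in $\mathcal{K}[C,x]$ or in $\mathcal{K}[C,y]$. Moreover, \cref{lem:fgm:half} does not apply to $C$. Your plan, built around a single shared vertex $v$, does not control both sides at once. What is needed is a pair separated on both sides simultaneously. The paper obtains it via the $x$-/$y$-separation argument: if no such pair exists, all cut edges funnel through one component on one side, and then an endpoint of $C$ is an empty $2$-junction, contradicting reducedness.
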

\begin{proof}
  Let $S\in\Lambda(G)$, and let $e_1, e_2$ be two edges of a cycle $C$ in $\mathcal{K}$ representing the cutset $S$.
  We distinguish three cases.
  
  \medskip
  \noindent
  \emph{Case 1: $C$ has at least three edges, and $e_1,e_2$
  have no common endpoint.}
  
  Write $e_1=xy$ and $e_2=wz$.
  By \cref{lem:fgm:half}, we have $\weight(A_x,A_y)=\lambda/2>0$ and $\weight(A_z,A_w)=\lambda/2>0$.
  Hence, there exist edges $f_1\in E_G(A_x,A_y)$ and $f_2\in E_G(A_z,A_w)$ that belong to $S$.
  We claim that $\{f_1,f_2\}$ is a forcing set for $S$.
 
  Suppose for contradiction that there exists $S'\in\Lambda(G)\setminus\{S\}$ such that $\{f_1,f_2\}\subseteq S'$.
  Let $e'_1,e'_2$ be the two edges of a cycle $C'$ representing $S'$.
  By \cref{lem:fgm:path}, for each $i\in\{1,2\}$, we have $E(P_{f_i})\cap E(C)=\{e_i\}$ and $\{e'_1,e'_2\}\cap E(P_{f_i})\neq\emptyset$.
  If $C'=C$, it follows that $e_i\in\{e'_1,e'_2\}$ for each $i\in\{1,2\}$.
  Hence, $\{e'_1,e'_2\}=\{e_1,e_2\}$, which implies $S'=S$, a contradiction.
  Thus, $C'\neq C$.
  Since $e_1$ and $e_2$ have no common endpoint, no component of $\mathcal K-E(C)$ contains edges from both $P_{f_1}$ and $P_{f_2}$.
  Since $\mathcal K$ is a cactus and $C'\neq C$, the cycle $C'$ lies in one such component.
  However, since $\{e'_1,e'_2\}\cap E(P_{f_i})\neq\emptyset$ for each $i\in\{1,2\}$, the cycle $C'$ contains an edge from each of $P_{f_1}$ and $P_{f_2}$, a contradiction.
  Therefore, $\{f_1,f_2\}$ is a forcing set for $S$.

  \medskip
  \noindent
  \emph{Case 2: $C$ has at least three edges, and $e_1,e_2$
  share a common endpoint.}
  Write $e_1=xz$ and $e_2=yz$.
  By \cref{lem:fgm:half}, we have $\weight(A_x,A_z)=\weight(A_y,A_z)=\lambda/2>0$ and $S=E_G(A_z,A_x\cup A_y)$.
  
  Suppose first that some edge joins $A_x\cup A_y$ and $\varphi^{-1}(z)$.
  By symmetry, assume that this edge has its other endpoint in $A_x$.
  Choose $f_x\in E_G(A_x,\varphi^{-1}(z))$ and $f_y\in E_G(A_y,A_z)$.
  By \cref{lem:fgm:path}, $P_{f_x}$ uses $xz$ and has all its other edges in $\mathcal{K}[C,x]$.
  Similarly, $P_{f_y}$ uses $yz$ and has all its other edges in $\mathcal{K}[C,y]\cup\mathcal{K}[C,z]$.
  Every cycle other than $C$ lies in a single component of $\mathcal{K}-E(C)$, so $C$ is the only cycle containing an edge of each of $P_{f_x}$ and $P_{f_y}$.
  By \cref{lem:fgm:path}, any minimum cut containing $f_x$ and $f_y$ must therefore be represented by $xz$ and $yz$.
  Thus, $\{f_x,f_y\}$ is a forcing set for $S$.

  We may therefore assume that there is no edge from $A_x\cup A_y$ to $\varphi^{-1}(z)$.
  Let $C_1,\ldots,C_k$ be the cycles other than $C$ that contain $z$.
  For each $i$, let $D_i$ be the component of $\mathcal{K}-z$ containing $V(C_i)\setminus\{z\}$, and let $B_i:=\varphi^{-1}(D_i)$.
  Thus, $A_z=\varphi^{-1}(z)\cup B_1\cup\cdots\cup B_k$.
  Define $I_x:=\{i\colon E_G(A_x,B_i)\neq\emptyset\}$ and $I_y:=\{i\colon E_G(A_y,B_i)\neq\emptyset\}$.
  Both sets are nonempty, as $\weight(A_x,A_z)=\weight(A_y,A_z)>0$ and no edge joins $A_x\cup A_y$ to $\varphi^{-1}(z)$. 
  If there are distinct $i\in I_x$ and $j\in I_y$, choose $f_x\in E_G(A_x,B_i)$ and $f_y\in E_G(A_y,B_j)$.
  Outside $C$, the edges of $P_{f_x}$ lie in $\mathcal{K}[C,x]\cup\mathcal{K}[C,z]$, and those of $P_{f_y}$ lie in $\mathcal{K}[C,y]\cup\mathcal{K}[C,z]$.
  Thus, any cycle other than $C$ containing an edge of each path must lie in $\mathcal{K}[C,z]$.
  However, the vertices of $P_{f_x}$ and $P_{f_y}$ in $\mathcal{K}[C,z]-z$ lie in the distinct components $D_i$ and $D_j$ of $\mathcal{K}-z$, respectively.
  A cycle cannot contain vertices in both components, so $C$ is the only cycle containing an edge of each path.
  By \cref{lem:fgm:path}, any minimum cut containing $f_x$ and $f_y$ must therefore be represented by $xz$ and $yz$, and hence equals $S$.
  Thus, $\{f_x,f_y\}$ is a forcing set for $S$.
  
  It remains to consider the case where no such distinct indices exist.
  Then $I_x=I_y=\{i^\ast\}$ for some $i^\ast$.
  Consequently, the endpoint in $A_z$ of every edge of $S$ belongs to $B_{i^\ast}$.
  The two edges of $C_{i^\ast}$ incident with $z$ represent the cut induced by $B_{i^\ast}$, and hence $\weight(B_{i^\ast}, V(G)\setminus B_{i^\ast})=\lambda=\weight(B_{i^\ast},V(G)\setminus A_z)$.
  It follows that there is no edge from $B_{i^\ast}$ to $A_z\setminus B_{i^\ast}$.
  There is also no edge from $A_z\setminus B_{i^\ast}$ to $V(G)\setminus A_z$.
  Since $G$ is connected, we obtain $A_z= B_{i^\ast}$.
  In particular, $\varphi^{-1}(z)=\emptyset$ and $B_j=\emptyset$ for every $j\neq i^\ast$.
  No such $C_j$ exists, as otherwise its two edges incident with $z$ would represent a minimum cut having $B_j = \emptyset$ as one of the shores.
  Thus, $z$ belongs to exactly $C$ and $C_{i^\ast}$, that is, it is a $2$-junction in $\mathcal{K}$.
  Since $(\mathcal{K},\varphi)$ is reduced, $C_{i^\ast}$ has length at least $3$.

  Let $p$ and $q$ be the two neighbors of $z$ in $C_{i^\ast}$.
  Note that $p$ and $q$ are distinct.
  Let $R_p:=\varphi^{-1}(\mathcal{K}[C_{i^\ast},p])$ and $R_q:=\varphi^{-1}(\mathcal{K}[C_{i^\ast},q])$.
  Applying \cref{lem:fgm:half} to $C$, we obtain $\weight(A_x,A_z)=\weight(A_y,A_z)=\lambda/2$.
  As $A_z=B_{i^\ast}$, we have $\varphi^{-1}(\mathcal{K}[C_{i^\ast},z])=V(G)\setminus A_z$.
  Since $p$ and $q$ are the only neighbors of $z$ on $C_{i^\ast}$, \cref{lem:fgm:half} applied to $C_{i^\ast}$ implies that every edge leaving $V(G)\setminus A_z$ has its other endpoint in $R_p\cup R_q$.
  Consequently, every edge in $S$ joins $A_x\cup A_y$ to $R_p\cup R_q$.
  Hence, $\weight(A_x,R_p\cup R_q)=\weight(A_y,R_p\cup R_q)=\lambda/2$.
  Every edge between $R_p\cup R_q$ and $V(G)\setminus A_z$ belongs to $S=E_G(A_z,A_x\cup A_y)$.
  Applying \cref{lem:fgm:half} to $C_{i^\ast}$ gives $\weight(V(G)\setminus A_z,R_p)=\weight(V(G)\setminus A_z,R_q)=\lambda/2$.
  Therefore, $\weight(A_x\cup A_y,R_p)=\weight(A_x\cup A_y,R_q)=\lambda/2$.
  These equalities imply that $\weight(A_x,R_p)=\weight(A_y,R_q)$ and $\weight(A_x,R_q)=\weight(A_y,R_p)$.
  Since $\weight(A_x,A_z)=\lambda/2>0$, we can therefore choose either $f_x\in E_G(A_x,R_p)$ and $f_y\in E_G(A_y,R_q)$, or $f_x\in E_G(A_x,R_q)$ and $f_y\in E_G(A_y,R_p)$.
  By symmetry, assume the former.
  By \cref{lem:fgm:path}, we have $E(P_{f_x})\cap E(C)=\{xz\}$, $E(P_{f_y})\cap E(C)=\{yz\}$, $E(P_{f_x})\cap E(C_{i^\ast})=\{zp\}$, and $E(P_{f_y})\cap E(C_{i^\ast})=\{zq\}$. 
  Outside $C\cup C_{i^\ast}$, the edges of $P_{f_x}$ lie in $\mathcal{K}[C,x]\cup\mathcal{K}[C_{i^\ast},p]$, and those of $P_{f_y}$ lie in $\mathcal{K}[C,y]\cup\mathcal{K}[C_{i^\ast},q]$.
  These four subgraphs are distinct components of $\mathcal{K}-(E(C)\cup E(C_{i^\ast}))$, so no cycle other than $C$ and $C_{i^\ast}$ contains an edge of each path.
  Suppose that there exists $S'\in\Lambda(G)\setminus\{S\}$ such that $\{f_x,f_y\}\subseteq S'$.
  Let $e'_1,e'_2$ be the two edges of a cycle $C'$ representing $S'$.
  By \cref{lem:fgm:path}, $C'$ contains an edge from each of $P_{f_x}$ and $P_{f_y}$.
  Hence, either $C' = C$ or $C' = C_{i^\ast}$.
  If $C'=C$, then $\{e'_1,e'_2\}=\{xz,yz\}$.
  If $C'=C_{i^\ast}$, then $\{e'_1,e'_2\}=\{zp,zq\}$.
  Since $A_z=B_{i^\ast}$, both pairs represent $S$, a contradiction.
  Therefore, $\{f_x,f_y\}$ is a forcing set for $S$.

  \medskip
  \noindent
  \emph{Case 3: $C$ is a $2$-cycle.}
  
  Let $x$ and $y$ be the endpoints of $e_1$ and $e_2$.
  Then $S=E_G(A_x,A_y)$.
  If $|S|=1$, its unique edge is a forcing set.
  Thus, we assume that $|S|\ge 2$.

  For $f \in S$, let $x(f)$ and $y(f)$ be the endpoints that are contained in $A_x$ and $A_y$, respectively. 
  We say that two edges $f_1$ and $f_2$ of $S$ are \emph{$x$-separated} if $\varphi(x(f_1))$ and $\varphi(x(f_2))$ lie in distinct components of $\mathcal{K}[C,x]-x$, or if at least one of them is $x$.
  We can define \emph{$y$-separated} edges analogously.
  Let $f_1, f_2\in S$ be both $x$-separated and $y$-separated.
  Suppose that $\{f_1, f_2\}$ is not a forcing set for $S$.
  Then some $S'\in\Lambda(G)\setminus \{S\}$ contains $f_1, f_2$ and is represented by two edges of a cycle $C'$.
  By \cref{lem:fgm:path}, each of $P_{f_1}$ and $P_{f_2}$ contains at least one of these two edges.
  Since $C$ has length $2$ and $S\neq S'$, we have $C'\neq C$.
  As $\mathcal{K}$ is a cactus, we can assume by symmetry that $C'$ lies in $\mathcal{K}[C, x]$.
  If $\varphi(x(f_i)) = x$ for some $i$, then $P_{f_i}$ has no edge in $\mathcal{K}[C, x]$, a contradiction.
  Otherwise, $\varphi(x(f_1))$ and $\varphi(x(f_2))$ lie in distinct components of $\mathcal{K}[C, x]-x$.
  The subpaths from these vertices to $x$ stay in their respective components until reaching $x$.
  However, all vertices of $C'$ other than $x$ lie in a single component, so $C'$ cannot contain an edge of each subpath, a contradiction.
  Thus, $\{f_1,f_2\}$ is a forcing set for $S$.
  
  Suppose that no such pair exists.
  Suppose first that some $f_1, f_2\in S$ are $x$-separated.
  Then they are not $y$-separated, so $\varphi(y(f_1))$ and $\varphi(y(f_2))$ lie in the same component of $\mathcal{K}[C, y]-y$.
  If $\varphi(y(f))$ lies outside this component for some $f\in S$, then $f$ is $y$-separated from both $f_1$ and $f_2$.
  Thus, $f$ is not $x$-separated from either of them, so $\varphi(x(f_1))$, $\varphi(x(f))$, and $\varphi(x(f_2))$ lie in the same component of $\mathcal{K}[C, x]-x$, a contradiction.
  Therefore, all $\varphi(y(f))$ with $f\in S$ lie in the same component, and no two edges of $S$ are $y$-separated.
  Consequently, either no two edges of $S$ are $x$-separated or no two are $y$-separated.
  By symmetry, we may assume the former.
  Since $|S|\ge 2$, none of their endpoints in $A_x$ belongs to $\varphi^{-1}(x)$.
  Thus, some component $D$ of $\mathcal{K}[C,x]-x$ contains $\varphi(x(f))$ for all $f \in S$. 
  Let $B:=\varphi^{-1}(D)$.
  Since $D$ contains $\varphi(x(f))$ for every $f\in S$, some $P_f$ uses an edge joining $x$ and $D$.
  As $\mathcal{K}$ is a cactus, this edge lies on a cycle whose vertices other than $x$ lie in $D$.
  The two edges of this cycle incident with $x$ represent the cut induced by $B$, so $\weight(B,V(G)\setminus B)=\lambda$.
  Since $x(f) \in B$ for every $f \in S$, we have $\weight(B,A_y)=\weight(A_x,A_y)=\lambda$.
  Hence, no edge joins $B$ and $A_x\setminus B$, and no edge joins $A_x\setminus B$ and $A_y$.
  Since $G$ is connected, it follows that $A_x=B$.
  Consequently, $\varphi^{-1}(x)=\emptyset$, and every component of $\mathcal{K}[C,x]-x$ other than $D$ has an empty inverse image.
  No such component exists, since it would define a shore of a minimum cut represented by $\mathcal{K}$.
  Therefore, $x$ is a $2$-junction, contradicting the assumption that $(\mathcal{K},\varphi)$ is reduced.
  Thus, the required pair $f_1,f_2$ exists.
\end{proof}

We are now ready to present the algorithm.
\begin{theorem}
  \label{thm:fgm:near-linear}
  \fgmincut{} can be solved in $\bigoh(m\,\mathrm{polylog}\,n)$ time, where $n$ and $m$ are the numbers of vertices and edges of the input graph.
\end{theorem}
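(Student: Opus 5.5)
By \cref{lem:fgm:two}, the answer is yes whenever $k\ge 2$, since every minimum cut has a forcing set of size at most $2$. The empty set is a forcing set exactly when $|\Lambda(G)|=1$. So the whole task reduces to deciding two things in near-linear time: whether $G$ has a unique minimum cut (this settles $k=0$), and otherwise whether some minimum cut has a forcing set of size $1$ (this settles $k=1$). For $k<0$ we answer no.

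First, compute a cactus representation $(\mathcal{K},\varphi)$ in $\bigoh(m\,\mathrm{polylog}\,n)$ time~\cite{HeHS24a,HenzingerLRW24}. Then reduce it by repeatedly applying \cref{lem:fgm:reduce}, which takes linear time in the size of $\mathcal{K}$; that size is $\bigoh(n)$. After reduction, distinct minimum cuts of $\mathcal{K}$ should correspond to distinct minimum cuts of $G$, because every shore $A_v$ is nonempty. Hence $|\Lambda(G)|$ equals the number of pairs of edges on a common cycle, namely $\sum_C\binom{|C|}{2}$, which is easy to evaluate. I would double-check the injectivity claim carefully. If it is troublesome, I would instead count distinct shores directly via hashing of cut sides.

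For $k=1$, observe that $\{f\}$ is a forcing set if and only if $f=uv$ lies in exactly one minimum cutset. So I would compute, for every edge $uv\in E(G)$, the number of minimum cuts separating $u$ and $v$, and check whether some edge has count exactly $1$. A pair represented by cycle edges $f_1,f_2$ of $C$ separates $u$ from $v$ precisely when the $\varphi(u)$--$\varphi(v)$ path crosses $C$, entering and leaving at distinct vertices $a\ne b$ of $C$, and $f_1$ and $f_2$ lie on different arcs of $C$ between $a$ and $b$. The number of such pairs is $d(|C|-d)$, where $d$ is the arc length between $a$ and $b$. Summing over the cycles crossed by the path gives the count for $uv$. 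The edge $uv$ has count $1$ exactly when the path crosses a single cycle, that cycle is a $2$-cycle, and the crossing is nontrivial. Equivalently, the shortest path $P_{uv}$ consists of a single edge that lies on a $2$-cycle.

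The main obstacle is computing these quantities for all $m$ edges quickly. I would do it with the standard tree of the cactus: a bipartite tree with one node per vertex of $\mathcal{K}$ and one node per cycle. I would root this tree and preprocess it for LCA queries. For each $uv$, the path in the tree between $\varphi(u)$ and $\varphi(v)$ then determines the crossed cycles. For the unique-cut test it suffices to check whether this tree path has length exactly $2$ (through one cycle node) and that cycle has length $2$. This takes $\bigoh(1)$ per edge after $\bigoh(n)$ preprocessing. Altogether the running time is dominated by the cactus construction, giving $\bigoh(m\,\mathrm{polylog}\,n)$.
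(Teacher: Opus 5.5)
Your algorithm is the paper's: build and reduce the cactus, answer yes for $k\ge 2$ by \cref{lem:fgm:two}, decide $k=1$ by testing whether some edge $uv$ has $\{\varphi(u),\varphi(v)\}$ equal to the endpoint pair of a $2$-cycle, and decide $k=0$ by testing whether $\mathcal{K}$ is a single $2$-cycle. What needs repair is the justification.

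\textbf{The injectivity claim is false.} The injectivity you wanted to double-check fails for reduced representations in the paper's sense. Suppose an empty vertex $z$ lies on exactly two cycles $C$ and $C'$, both of length at least $3$. The reduction rule does not apply. Yet the two edges of $C$ at $z$ and the two edges of $C'$ at $z$ induce the same cut of $G$, because the corresponding shores in $\mathcal{K}$ differ only by $z$. This really occurs. Take $K_4$ on $a,b,c,d$ with $\weight(ab)=\weight(cd)=2$ and all other weights $1$. The triangles $zab$ and $zcd$, glued at an empty vertex $z$, form a reduced cactus representation with $6$ edge pairs but only $5$ minimum cuts. The paper meets exactly this situation at the end of Case~2 of \cref{lem:fgm:two}. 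Hence neither $|\Lambda(G)|=\sum_C\binom{|C|}{2}$ nor your per-edge count $\sum d(|C|-d)$ is valid in general.

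\textbf{Why your tests still work.} They only need two weaker facts.
\begin{enumerate}
\item Distinct pairs of edges on the same cycle $C$ induce distinct cuts of $G$, because the sets $A_v$ with $v\in V(C)$ are nonempty and pairwise disjoint.
\item In a reduced representation, two distinct $2$-cycles induce distinct cuts. Otherwise the part of $\mathcal{K}$ strictly between them has empty preimage. Since every shore of a cactus cut has nonempty preimage, this part contains no cycle of length at least $3$ and no leaf other than its two attachment vertices. So it is a (possibly trivial) chain of $2$-cycles whose end vertices are empty $2$-junctions lying on $2$-cycles, contradicting reducedness.
\end{enumerate}
With these two facts, a cycle of length at least $3$ yields at least $3$ distinct minimum cuts. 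If $P_{uv}$ uses one of its edges, at least $2$ of these contain $uv$. Two $2$-cycles yield two distinct cuts. This gives exactly the paper's criterion for $k=1$ and its test for $k=0$.

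The vertex--cycle tree with LCA queries and the hashing fallback are unnecessary. Marking the endpoint pairs of the $2$-cycles and looking up each $\{\varphi(u),\varphi(v)\}$ suffices.
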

\begin{proof}
  Let $n:=|V(G)|$ and $m:=|E(G)|$.
  We first construct a cactus representation $(\mathcal{K},\varphi)$ of $\Lambda(G)$ in $\bigoh(m\,\mathrm{polylog}\,n)$ time~\cite{HeHS24a,HenzingerLRW24}.
  The cactus has $\bigoh(n)$ vertices and edges.
  By repeatedly applying \cref{lem:fgm:reduce}, we can obtain a reduced representation in $\bigoh(n)$ time.
  If $k\ge 2$, the answer is yes by \cref{lem:fgm:two}.

  We next consider the case $k=1$.
  \begin{claim}
    \label{clm:fgm:single}
    $G$ has a forcing set of size $1$ if and only if there exists a $2$-cycle of $\mathcal{K}$ with endpoints $x$ and $y$ such that $E_G(\varphi^{-1}(x),\varphi^{-1}(y))\neq\emptyset$.
  \end{claim}
  \begin{claimproof}
    First, suppose that $\mathcal{K}$ has a $2$-cycle $C$ with $V(C) = \{x, y\}$ such that $E_G(\varphi^{-1}(x),\varphi^{-1}(y))\neq\emptyset$.
    Let $e=uv\in E_G(\varphi^{-1}(x),\varphi^{-1}(y))$.
    The path $P_e$ consists of one of the two edges of $C$.
    By \cref{lem:fgm:path}, every minimum cut containing $e$ is represented by $C$.
    Since $C$ has exactly two edges, this minimum cut is unique.
    Hence, $\{e\}$ is a forcing set.

    Conversely, suppose that $\{e\}$ is a forcing set.
    If $P_e$ traverses a cycle $C$ of length at least $3$, then an edge of $C$ used by $P_e$ can be paired with two different edges of $C$ to obtain two distinct minimum cuts containing $e$, a contradiction.
    Thus, every cycle traversed by $P_e$ is a $2$-cycle.

    Moreover, $P_e$ cannot traverse two distinct $2$-cycles, since they represent distinct minimum cuts in a reduced cactus representation, both containing $e$.
    Hence $P_e$ traverses exactly one $2$-cycle $C$, say with endpoints $x$ and $y$.
    Therefore, $e\in E_G(\varphi^{-1}(x),\varphi^{-1}(y))$.
  \end{claimproof}

  By \cref{clm:fgm:single}, given $(\mathcal{K},\varphi)$, we can mark the unordered pair of endpoints of every $2$-cycle of $\mathcal{K}$.
  We then examine every edge $\{u,v\}\in E(G)$ and test whether $\{\varphi(u),\varphi(v)\}$ is one of the marked pairs.
  Thus, the case $k=1$ can be decided in $\bigOh(n+m)$ time.

  Finally, suppose that $k=0$.
  The answer is yes if and only if $G$ has a unique minimum cut.
  In a reduced cactus representation, this holds if and only if $\mathcal{K}$ consists of a single $2$-cycle.
  Hence, the case $k=0$ can be decided in $\bigOh(n)$ time as well.
  All operations after constructing the cactus representation take $\bigOh(n+m)$ time.
  Since $G$ is connected, $n=\bigOh(m)$, and hence the total running time is $\bigoh(m\,\mathrm{polylog}\,n)$.
\end{proof}
The proof of \cref{thm:fgm:near-linear} also works for a specified minimum cut.
\begin{corollary}
    \fgmincut{} with a specified minimum cut can be solved in $\bigoh(m\,\mathrm{polylog}\,n)$ time.
\end{corollary}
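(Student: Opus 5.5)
We follow the proof of \cref{thm:fgm:near-linear}, now with a target cutset $S=\cutset_G(X)\in\Lambda(G)$ given in the input. We compute a reduced cactus representation $(\mathcal{K},\varphi)$ in $\bigoh(m\,\mathrm{polylog}\,n)$ time, as before. Next we locate a pair of cactus edges representing $S$. To do this, we mark the vertices of $\varphi(X)$. Since $\{X,V(G)\setminus X\}$ is a minimum cut, the first condition of a cactus representation guarantees a shore $Y$ of a minimum cut of $\mathcal{K}$ with $\varphi^{-1}(Y)=X$ or its complement. Equivalently, we look for two edges $e_1,e_2$ on a common cycle such that the cactus edges crossing between marked and unmarked vertices, ignoring empty vertices, are consistent with deleting $e_1,e_2$. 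A simpler route avoids searching over pairs: take any edge $uv\in S$ and its shortest path $P_{uv}$, which by \cref{lem:fgm:path} contains one of the representing edges. For each cycle $C$ met by $P_{uv}$, compute the sets $A_w$ with $w\in V(C)$ by a traversal of $\mathcal{K}$. The representing edges are found on the cycle whose sets $A_w$ split exactly into those contained in $X$ and those contained in $V(G)\setminus X$, forming two arcs of $C$. The $A_w$ over all cycles can be accumulated in $\bigoh(n+m)$ total time with one DFS over $\mathcal{K}$ that records, for each cycle, the subtree sizes of marked and unmarked vertices.

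The case $k\ge 2$ is now immediate. \Cref{lem:fgm:two} asserts that \emph{every} minimum cut, in particular $S$, has a forcing set of size at most $2$. Its proof only inspects edges of $S$, so such a set is contained in $S$ as required. For $k=0$, the answer is yes if and only if $\mathcal{K}$ is a single $2$-cycle, exactly as before.

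For $k=1$, we use the argument of \cref{clm:fgm:single} restricted to the given cut. The claim is that $S$ has a forcing set $\{e\}$ with $e\in S$ if and only if $S$ is represented by a $2$-cycle $C$ with $V(C)=\{x,y\}$ and $E_G(\varphi^{-1}(x),\varphi^{-1}(y))\neq\emptyset$. The converse direction of that claim's proof shows that a singleton forcing set $\{e\}$ forces the cut represented by the unique $2$-cycle traversed by $P_e$, and this cut must then be $S$. The forward direction is unchanged. Hence we check whether the representing edges form a $2$-cycle, and if so scan $E(G)$ once for an edge between $\varphi^{-1}(x)$ and $\varphi^{-1}(y)$, in $\bigoh(n+m)$ time.

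The main obstacle is identifying the representing pair of $S$ in linear time, together with the uniqueness issue created by empty vertices. In a reduced representation we still need the pair to be essentially unique, at least up to representations that do not affect whether the cycle is a $2$-cycle. We would argue this from reducedness: two distinct pairs inducing the same cut of $G$ would force an empty $2$-junction on a $2$-cycle, which is exactly the configuration that \cref{lem:fgm:reduce} eliminates. The junction argument at the end of Case~2 of \cref{lem:fgm:two} handles the one remaining situation, a $2$-junction joining two cycles of length at least $3$, and there neither cycle is a $2$-cycle, so the $k=1$ test is unaffected.
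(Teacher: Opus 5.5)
Your proposal is correct and follows the paper, whose proof simply observes that the argument for \cref{thm:fgm:near-linear} carries over to a specified cut. You can drop the step of locating the cactus pair that represents $S$, together with the uniqueness discussion: for $k=1$ it suffices to scan the edges $uv\in S$ and test whether $\{\varphi(u),\varphi(v)\}$ is a marked $2$-cycle pair, because a singleton forcing set $\{e\}$ with $e\in S$ must force $S$ itself, since $S$ is a minimum cutset containing $e$.
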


\subsection{\NP-completeness of \afgmincut{}}
Next we show that \afgmincut{} is \NP-complete.
Since the membership in \NP\ is clear, we only need to prove \NP-hardness.
To this end, we consider the following closely related problem.

\begin{tcolorbox}
\begin{description}
  \item[Problem] \hgmc{}
  \item[Input: ] A graph $G$ and an integer $k$.
  \item[Goal: ] Determine if $G$ has a set $F\subseteq E(G)$ of size at most $k$ such that each minimum cut of $G$ intersects $F$.
\end{description}
\end{tcolorbox}

\hgmc{} and \afgmincut{} are similar: While \hgmc{} asks to find a smallest edge subset that hits all cutsets in $\Lambda(G)$, \afgmincut{} asks to find one that hits all but one cutset.

\begin{theorem}
  \label{thm:afgm:hgmc:NPC}
  \hgmc{} is \NP-complete even if $G$ is simple.
\end{theorem}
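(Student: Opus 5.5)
We reduce from \uta{}, which is \NP-complete~(Frederickson and J\'aJ\'a). An instance is a tree $T$ on vertex set $V$, a set $L$ of links (pairs of vertices of $T$, not tree edges), and an integer $k$. A link $\ell=ab$ \emph{covers} a tree edge $e$ if $e$ lies on the $a$--$b$ path in $T$. The question is whether some $L'\subseteq L$ with $|L'|\le k$ covers every tree edge. We may assume that $L$ covers all of $T$, since otherwise the instance is trivially a no-instance.

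Membership in \NP{} is straightforward. A connected graph has $\bigOh(n^2)$ minimum cuts, and they can be listed in polynomial time from a cactus representation. Hence, for a guessed $F$, we can check that every cutset in $\Lambda(G)$ meets $F$.

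\emph{Construction.} For a tree edge $e$, let $c(e)$ be the number of links covering $e$, and put $W:=2|L|+1$. Let $G$ be the weighted multigraph on $V$ with the following edges:
\begin{itemize}
\item each link $\ell\in L$ is an edge of weight $1$;
\item each tree edge $e$ is an edge of weight $W-c(e)\ge |L|+1>0$.
\end{itemize}
Removing a tree edge $e$ splits $T$ into two shores; call the corresponding cut of $G$ the fundamental cut of $e$. Its cutset consists of $e$ together with exactly the $c(e)$ links covering $e$, so its weight is exactly $W$. Any other cut of $G$ separates the endpoints of at least two tree edges, so its weight is at least $2(|L|+1)>W$. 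Hence $\lambda(G)=W$, and $\Lambda(G)$ consists precisely of the cutsets $\{e\}\cup\{\ell\in L:\ell \text{ covers } e\}$ for tree edges $e$.

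\emph{Correctness.} A set of links hits all minimum cuts if and only if it covers all tree edges. Conversely, suppose $F$ hits all minimum cuts and contains a tree edge $e$. That edge hits only the fundamental cut of $e$. Replacing it by any link covering $e$ (one exists by assumption) hits a superset of cuts and does not increase $|F|$. So the minimum sizes coincide, and $(T,L,k)$ is a yes-instance if and only if $(G,k)$ is.

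\emph{Making $G$ simple and unweighted.} All weights are polynomially bounded, so we apply \cref{lem:transform-unweighted-simple}. It remains to check that the minimum hitting-set size is preserved. A tree edge becomes a bundle of $W-c(e)$ edges, any single one of which lies in exactly the same minimum cuts as the original edge. Clique edges lie in no minimum cut and are useless. So any hitting set of the new graph maps back to one of the original graph of no larger size, and vice versa, exactly as in the remark following \cref{lem:transform-unweighted-simple}.

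The only delicate point is the weight balancing: every fundamental cut must have weight exactly $W$, while every cut crossing two tree edges must be strictly heavier. Choosing $W>2|L|$ handles both requirements, so I do not expect a real obstacle here.
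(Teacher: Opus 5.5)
Your proposal is correct and follows essentially the same reduction as the paper. Both reduce from \uta{}, give each tree edge $e$ total weight (the paper uses parallel copies) equal to the target value minus $c(e)$ and each link unit weight, so that the minimum cuts are exactly the fundamental cuts, then apply the same exchange argument replacing tree edges by covering links, and finish with \cref{lem:transform-unweighted-simple}. The only cosmetic difference is that you take $W=2|L|+1$ where the paper takes $2c_{\max}+1$, and either choice works.
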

Membership in \NP\ is clear, as every $n$-vertex graph has $\bigoh(n^2)$ minimum cuts, which can be enumerated in polynomial time using its cactus representation.
We prove \NP-hardness by a reduction from the following \NP-complete problem, \uta{}~\cite{FredericksonJ81}.
\begin{tcolorbox}
\begin{description}
  \item[Problem] \uta{}
  \item[Input: ] A tree $T$, a set of paths $\mathcal{P}$ of $T$, and an integer $k$.
  \item[Goal: ] Determine if there are at most $k$ paths in $\mathcal{P}$ that cover all edges of $T$.
\end{description}
\end{tcolorbox}

Let $(T, \mathcal{P}, k)$ be an instance of \uta{}.
If some edge of $T$ is contained in no path of $\mathcal{P}$, the instance is trivially a no-instance.
Hence, we may assume that each edge of $T$ is included in at least one path in $\mathcal{P}$.
For each $e\in E(T)$, let $c_e$ be the number of paths in $\mathcal{P}$ that contain $e$.
Let $c_{\max} = \max_{e \in E(T)} c_e$ and $\lambda = 2c_{\max}+1$.

Let $E_{\mathcal{P}} = \{e_P = uv : P\in \mathcal{P},\, u \,\text{and}\, v \, \text{are the endpoints of}\, P\}$.
We construct a multigraph $G$ on vertex set $V(T)$ by adding $\lambda-c_e$ parallel copies of each edge $e \in E(T)$, together with all edges in $E_{\mathcal{P}}$.
For each edge $e\in E(T)$, let $A_e, B_e$ be the vertex sets of the two connected components of $T-e$.

\begin{lemma}
\label{lem:afgm:hgmc:min-cut}
It holds that $\lambda(G) = \lambda$ and $\Lambda(G) = \{E_G(A_{e}, B_{e}) : e\in E(T)\}$.
\end{lemma}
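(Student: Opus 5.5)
We compare the weight of every cut of $G$ with $\lambda$. Fix a nonempty proper subset $X \subsetneq V(T)$ and let $F_X \subseteq E(T)$ be the set of tree edges with exactly one endpoint in $X$; $F_X$ is nonempty because $T$ is connected. A path edge $e_P$ crosses the cut $\{X, V\setminus X\}$ if and only if the endpoints of $P$ lie on different sides. Since $P$ is the unique path in $T$ between its endpoints, this happens if and only if $|E(P) \cap F_X|$ is odd. Therefore
\begin{align*}
  \weight_G(X, V\setminus X) = \sum_{e \in F_X} (\lambda - c_e) + \bigl|\{P \in \mathcal{P} : |E(P)\cap F_X| \text{ odd}\}\bigr| .
\end{align*}

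\textbf{Single tree edge.} If $F_X = \{e\}$, then $X$ is $A_e$ or $B_e$. A path crosses the cut exactly when it contains $e$, so the weight is $(\lambda - c_e) + c_e = \lambda$. Hence every $E_G(A_e, B_e)$ is a cut of weight exactly $\lambda$.

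\textbf{Several tree edges.} If $|F_X| \ge 2$, the first sum is at least $2(\lambda - c_{\max})$. Since $\lambda = 2c_{\max}+1$, this is at least $2c_{\max}+2 = \lambda+1 > \lambda$. The second term is nonnegative, so the weight strictly exceeds $\lambda$.

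Together these give $\lambda(G) = \lambda$. They also show that the minimum cuts are exactly those with $|F_X| = 1$, namely $\{A_e, B_e\}$ for $e \in E(T)$, which yields the claimed description of $\Lambda(G)$.

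The only step needing care is the parity characterization of when $e_P$ crosses the cut, i.e., that the tree path between the endpoints of $P$ switches sides exactly at the edges of $F_X$ it uses. We also need $\lambda - c_e \ge c_{\max}+1 > 0$, so that every tree edge actually appears in $G$ and $G$ is connected. The choice of $\lambda$ is exactly what makes any cut crossing two tree edges too expensive, so this bound is the crux and otherwise routine.
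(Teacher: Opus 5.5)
Your proof is correct and follows essentially the same route as the paper. The cut $\{A_e,B_e\}$ has weight $(\lambda-c_e)+c_e=\lambda$, and any other cut is crossed by at least two tree edges, so its weight is at least $2(\lambda-c_{\max})=\lambda+1$. Your parity formula for the path edges is more than the argument needs; only its single-edge case is used, and there it is just the paper's observation that $e_P$ crosses $\{A_e,B_e\}$ exactly when $e\in E(P)$.
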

\begin{proof}
Observe that $E_G(A_e, B_e)$ consists of $\lambda-c_e$ parallel edges corresponding to $e$ and $c_e$ edges in $E_{\mathcal{P}}$ corresponding to the paths in $\mathcal{P}$ that contain $e$.
It follows that $|E_G(A_{e}, B_{e})|=\lambda$.
    
Now, let $\{X,Y\}$ be a cut of $G$ such that $\{X,Y\}\neq\{A_e,B_e\}$ for every $e\in E(T)$.
Then, there exist two edges $e_1, e_2 \in E(T)$ that connect $X$ and $Y$.
Hence, $E_G(X,Y)$ contains the $\lambda-c_{e_i}$ parallel edges corresponding to $e_i$ for each $i\in\{1,2\}$.
Therefore, $|E_{G}(X,Y)| \ge (\lambda - c_{e_{1}}) + (\lambda - c_{e_{2}}) \ge 2\lambda - 2 c_{\max} = \lambda + 1$.
\end{proof}

\begin{lemma}
\label{lem:afgm:hgmc:iff}
$(T,\mathcal{P},k)$ is a yes-instance of \uta{} if and only if $(G,k)$ is a yes-instance of \hgmc{}.
\end{lemma}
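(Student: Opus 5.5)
By \cref{lem:afgm:hgmc:min-cut}, the minimum cuts of $G$ are in bijection with the edges of $T$: each $e\in E(T)$ corresponds to the cutset $S_e:=E_G(A_e,B_e)$. So hitting all minimum cuts means hitting every $S_e$. The plan is to show that an optimal hitting set can always be taken inside $E_{\mathcal{P}}$. Once that is done, choosing $e_P\in E_{\mathcal{P}}$ corresponds to choosing the path $P$, and $e_P$ hits $S_e$ exactly when $P$ contains $e$.

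\emph{Forward direction.} Let $\mathcal{Q}\subseteq\mathcal{P}$ be a cover of $E(T)$ with $|\mathcal{Q}|\le k$, and set $F:=\{e_P : P\in\mathcal{Q}\}$. For any $e\in E(T)$, some $P\in\mathcal{Q}$ contains $e$. Then the endpoints of $P$ lie on different sides of $\{A_e,B_e\}$, because $e$ is the unique $A_e$--$B_e$ edge of the tree and $P$ is the unique tree path between its endpoints. Hence $e_P\in S_e$, so $F$ hits every minimum cut, and $|F|\le k$.

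\emph{Backward direction.} Let $F$ be a hitting set with $|F|\le k$. Each edge of $F$ is either some $e_P$ or a parallel copy of a tree edge $e$. A parallel copy of $e$ lies in $S_{e'}$ only when $e'=e$, since for $e'\neq e$ both endpoints of $e$ lie on the same side of $T-e'$. So such a copy hits only $S_e$. By assumption $e$ lies on some path $P\in\mathcal{P}$, and as shown above $e_P\in S_e$. Replacing the copy by $e_P$ therefore keeps $F$ a hitting set and does not increase its size. After all such replacements we have $F\subseteq E_{\mathcal{P}}$. Then $\{P : e_P\in F\}$ has at most $k$ paths, and it covers every edge $e$, because $e_P\in S_e$ holds iff $P$ contains $e$.

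The only step that needs care is the characterization that $e_P\in S_e$ iff $e\in E(P)$. This follows from uniqueness of paths in trees: the endpoints of $P$ are separated by $T-e$ exactly when the tree path between them uses $e$. One minor point is that $G$ is a multigraph while \cref{thm:afgm:hgmc:NPC} claims hardness for simple graphs. That is handled afterwards by \cref{lem:transform-unweighted-simple} and is not part of this lemma. Since $\lambda$ is polynomial in $|\mathcal{P}|$, that transformation runs in polynomial time.
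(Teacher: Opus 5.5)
Your proof is correct and follows the paper's argument essentially step for step. The forward direction takes $\{e_P : P\in\mathcal{P}'\}$, and the backward direction replaces each parallel copy of a tree edge $e$ (which hits only $E_G(A_e,B_e)$) by some $e_P$ with $e\in E(P)$ and then reads off the path set; your explicit justification that $e_P\in E_G(A_e,B_e)$ if and only if $e\in E(P)$, via uniqueness of tree paths, is the fact the paper uses implicitly.
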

\begin{proof}
Let $\mathcal{P}'\subseteq \mathcal{P}$ be a set of at most $k$ paths that cover all edges of $T$.
Let $E_{\mathcal{P}'} = \{e_P\in E_{\mathcal{P}} : P\in \mathcal{P}'\}$.
We show that $E_{\mathcal{P}'}$ is a solution for the instance $(G, k)$.
By \cref{lem:afgm:hgmc:min-cut}, it suffices to show that $E_{\mathcal{P}'}\cap E_G(A_e,B_e)\neq\emptyset$ for every $e\in E(T)$.

Fix $e\in E(T)$.
Since $\mathcal{P}'$ covers all edges of $T$, there is a path $P \in \mathcal{P}'$ that contains $e$, and thus $e_P\in E_{\mathcal{P}'}$.
Since $P$ contains $e$, one of the endpoints of $e_P$ belongs to $A_e$ and the other belongs to $B_e$.
This implies $e_P\in E_G(A_e, B_e)$.

Conversely, let $F$ be a solution for $(G, k)$.
Suppose that $F$ contains a parallel copy $\tilde e$ corresponding to some $e\in E(T)$. 
Since every edge of $T$ belongs to some path in $\mathcal{P}$, there is a path $P\in\mathcal{P}$ containing $e$. 
The edge $\tilde e$ belongs to $E_G(A_e,B_e)$, but to no $E_G(A_f,B_f)$ with $f\neq e$.
Since $P$ contains $e$, the endpoints of $P$ belong to different components of $T-e$, and hence $e_P\in E_G(A_e, B_e)$.
Thus, every minimum cut hit by $\tilde e$ is also hit by $e_P$.
Therefore, replacing $\tilde e$ with $e_P$ preserves feasibility.
By repeatedly applying this replacement, we can assume that $F\subseteq E_{\mathcal{P}}$.
By \cref{lem:afgm:hgmc:min-cut}, $F\cap E_G(A_e,B_e)\neq\emptyset$ for every $e\in E(T)$.

Let $\mathcal{P}'=\{P\in\mathcal{P} : e_P\in F\}$.
Clearly, $|\mathcal{P}'|\le k$.
It remains to show that $\mathcal{P}'$ covers every edge of $T$.
Fix $e\in E(T)$, and let $e_P\in F\cap E_G(A_e,B_e)$.
Since the endpoints of $e_P$ belong to different sides of $\{A_e,B_e\}$, $P$ contains $e$.
\end{proof}

By \cref{lem:transform-unweighted-simple}, $G$ can be transformed in polynomial time into a simple graph $G'$ whose minimum cuts are in one-to-one correspondence with those of $G$.
This transformation also preserves feasible solutions of \hgmc{} without increasing their size.
Therefore, \hgmc{} remains \NP-hard even when the input graph is simple.

Now we are ready to prove \NP-hardness of \afgmincut.
\begin{theorem}\label{thm:afgm:main}
  \afgmincut{} is \NP-complete even for unweighted simple graphs.
\end{theorem}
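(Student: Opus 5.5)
Membership in \NP{} follows as for \hgmc{}: given $F$, we enumerate the $\bigoh(n^2)$ minimum cuts via a cactus representation and check that exactly one cutset avoids $F$. For hardness we reduce from \hgmc{}, which is \NP-complete on simple graphs by \cref{thm:afgm:hgmc:NPC}. We build a graph $H$ whose minimum cuts are those of $G$ plus one extra minimum cut that \emph{cannot} be hit cheaply and is therefore the natural candidate for the unique surviving cut.

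Let $(G,k)$ be an instance of \hgmc{} with $G$ simple and connected, $\lambda:=\lambda(G)$, and $n:=|V(G)|$. Take a fresh vertex $w$ and a vertex $v_0\in V(G)$, and join $w$ to $v_0$ by $\lambda$ parallel edges, giving a multigraph $H$. We claim $\lambda(H)=\lambda$ and
\[
\Lambda(H)=\Lambda(G)\cup\bigl\{\cutset_H(\{w\})\bigr\},
\]
where each $\cutset_G(X)\in\Lambda(G)$ is identified with the cutset of $H$ obtained by placing $w$ on the side of $v_0$. To see this, note that any cut of $H$ separating $w$ from $v_0$ has weight at least $\lambda$, with equality only if its restriction to $G$ has weight $0$, i.e., only for the shores $\{w\}$ and $V(G)$, since $G$ is connected. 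Any cut keeping $w$ with $v_0$ has the same weight as the induced cut of $G$, unless it is the trivial split of $G$, which is impossible. So these are exactly the minimum cuts. The special cutset $\cutset_H(\{w\})$ consists of $\lambda$ parallel edges, none of which lies in any other minimum cutset.

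We claim $(G,k)$ is a yes-instance of \hgmc{} if and only if $(H,k)$ is a yes-instance of \afgmincut{}. Forward: a hitting set $F$ for $\Lambda(G)$, $|F|\le k$, uses only edges of $G$, so it hits every cutset except $\cutset_H(\{w\})$. Hence it is an anti-forcing set.

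Backward: let $F$ be an anti-forcing set with $|F|\le k$, and let $S^*$ be the unique cutset avoiding $F$. If $S^*=\cutset_H(\{w\})$, then $F\cap E(G)$ hits all of $\Lambda(G)$ and we are done. Otherwise $S^*=\cutset_G(X)$ for some $X$, and $F$ contains a $w$-edge, which is useless for hitting other cuts. Replace that $w$-edge by any edge of $S^*$. The result hits every cutset in $\Lambda(G)$ using at most $k$ edges of $G$. This replacement is exactly where the asymmetry of the extra cut is used.

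Finally, we apply \cref{lem:transform-unweighted-simple} to $H$ to obtain an unweighted simple graph, as in the proof for \hgmc{}. That transformation preserves minimum cuts bijectively and preserves the minimum size of anti-forcing sets, as noted after the lemma, and the weights ($\lambda\le n$) are polynomially bounded, so the reduction runs in polynomial time. The main obstacle is the case analysis establishing $\Lambda(H)$ exactly, in particular that no new minimum cut arises apart from $\{w\}$. The multiplicity $\lambda$ on the pendant edges together with connectivity of $G$ is what settles it.
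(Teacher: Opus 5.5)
Your proposal is correct and matches the paper's proof essentially step for step: the same pendant vertex attached by $\lambda$ parallel edges, the same identity $\Lambda(H)=\Lambda(G)\cup\{\cutset_H(\{w\})\}$, the same exchange argument in the backward direction, and the same final appeal to \cref{lem:transform-unweighted-simple}. The only cosmetic fix is in the exchange step: drop \emph{all} edges of $F\cap\cutset_H(\{w\})$, not just one, before adding an edge of the avoided cutset (as the paper does), so that the resulting set lies in $E(G)$ and has size at most $|F|$.
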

\begin{proof}
  Let $(G, k)$ be an instance of \hgmc{} with $\lambda(G) = \lambda$, where $G$ is a simple graph.
  Let $H$ be the graph obtained from $G$ by adding one new vertex $u$ and $\lambda$ parallel edges between $u$ and an arbitrary vertex, say $v$, in $G$.
  Let $S^*$ be the set of parallel edges added as above.
  
  \begin{claim}
    \label{clm:afgm:min-cut-H}
    It holds that $\Lambda(H)=\Lambda(G)\cup\{S^*\}$.
  \end{claim}
  \begin{claimproof}
    Since $|S^*|=\lambda$ and $\cutset_{H}(\{u\}) = S^*$, we have $\lambda(H)\le\lambda$.
    Consider an arbitrary cutset $S$ of $H$.
    If $S$ separates $u$ and $v$, then it contains all $\lambda$ parallel edges between them, and hence $|S|\ge\lambda$.
    Otherwise, $u$ and $v$ lie on the same shore of $S$, and $S$ corresponds to a cutset of $G$, implying that $|S|\ge\lambda$.
    Thus, $\lambda(H)=\lambda$.
    Moreover, equality holds precisely when $S = S^*$ or $S \in \Lambda(G)$.
  \end{claimproof}
    
  Now we show that $(G,k)$ is a yes-instance of \hgmc{} if and only if $(H,k)$ is a yes-instance of \afgmincut{}.

  First, let $F$ be a solution for $(G,k)$.
  Then $F$ hits all the cutsets in $\Lambda(G)$.
  Since $F\subseteq E(G)$, we have $F\cap S^* = \emptyset$.
  By \cref{clm:afgm:min-cut-H}, every $S \in \Lambda(H)$ other than $S^*$ is the cutset of a minimum cut of $G$.
  Thus, $F$ is an anti-forcing set for $\Lambda(H)$.

  Conversely, let $F$ be an anti-forcing set for $\Lambda(H)$ with $F \subseteq E(H) \setminus S$ for some $S \in \Lambda(H)$.
  If $S = S^*$, then $F$ hits all cutsets in $\Lambda(G)$, and we are done.
  Suppose that $S \neq S^*$.
  Then $S \in \Lambda(G)$ and $F \cap S^* \neq \emptyset$.
  Choose an arbitrary edge $e\in S$ and let $F' = (F\setminus S^*)\cup \{e\}$.
  Since $F\cap S^*\neq \emptyset$, we have $|F'|\le |F|$.
  The cutset of every minimum cut of $G$ other than $S$ is hit by $F\setminus S^*$, while $S$ is hit by $e$.
  Therefore, $F'$ hits the cutset of every minimum cut of $G$.

  Finally, apply \cref{lem:transform-unweighted-simple} to $H$.
  Since $H$ is an unweighted multigraph of polynomial size, this yields an equivalent unweighted simple graph while preserving anti-forcing sets and the specified target cut.
\end{proof}

Observe that the construction above can be used as a reduction from \hgmc{} to \afgmincut{} with a specified cut, where $S^*$ is the cutset to be unique.
Hence, we obtain the following corollary.
\begin{corollary}
  \label{thm:afgm:main-scut}
  \afgmincut{} with a specified cut is \NP-complete even for unweighted simple graphs.
\end{corollary}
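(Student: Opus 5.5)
Membership in \NP\ is immediate: given the specified cut $\{X, V\setminus X\}$ and a candidate set $F \subseteq E \setminus \cutset_G(X)$, we enumerate all $\bigoh(n^2)$ minimum cuts via a cactus representation and check that $F$ hits every cutset in $\Lambda(G)$ other than $\cutset_G(X)$. For \NP-hardness, I would reuse the reduction from the proof of \cref{thm:afgm:main} essentially verbatim, but now output the specified cut $\{\{u\}, V(H)\setminus\{u\}\}$, whose cutset is $S^*$. Concretely, starting from an instance $(G,k)$ of \hgmc{} with $G$ simple and $\lambda(G)=\lambda$ (hard by \cref{thm:afgm:hgmc:NPC}), build $H$ by attaching a new vertex $u$ to some $v\in V(G)$ via $\lambda$ parallel edges. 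The output instance is $(H, \{u\}, k)$.

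The equivalence is simpler than in \cref{thm:afgm:main}, because the target is fixed. By \cref{clm:afgm:min-cut-H}, $\Lambda(H)=\Lambda(G)\cup\{S^*\}$. An anti-forcing set $F$ for $\Lambda(H)$ with $F\subseteq E(H)\setminus S^*$ must hit every cutset of $\Lambda(H)$ other than $S^*$. These are exactly the cutsets in $\Lambda(G)$, and since $F\cap S^*=\emptyset$ we have $F\subseteq E(G)$. So $F$ is a solution of \hgmc{} for $(G,k)$. Conversely, any solution $F\subseteq E(G)$ of \hgmc{} avoids $S^*$ and hits all other minimum cutsets, so $S^*$ is the unique minimum cutset excluding $F$. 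In particular, the exchange argument used in \cref{thm:afgm:main} for the case $S\neq S^*$ is not needed here.

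Finally, I would apply \cref{lem:transform-unweighted-simple} to make $H$ unweighted and simple. Since $H$ has polynomially many parallel edges, this runs in polynomial time. Under the clique blow-up, the vertex $u$ becomes a clique $K_u$, and the specified cut maps to $\{V(K_u), \text{rest}\}$. As noted after the lemma, the one-to-one correspondence of minimum cuts, together with the edge mapping between cliques, preserves anti-forcing sets with respect to this specified cut.

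The only step needing care is this last one: checking that the transformation maps the specified cut to a minimum cut of the new graph and preserves the minimum size of anti-forcing sets relative to it. This follows from the bijection between minimum cuts and the fact that clique edges lie in no minimum cut, so any solution can be mapped back to original edges without increasing its size.
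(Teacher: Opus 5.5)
Your proposal is correct and matches the paper's argument: reuse the construction from \cref{thm:afgm:main} with $S^* = \cutset_H(\{u\})$ as the specified cut, then apply \cref{lem:transform-unweighted-simple}. Your remark that the exchange step for $S \neq S^*$ becomes unnecessary once the target is fixed to $S^*$ is exactly why the paper can state this as an immediate corollary.
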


%% file: minstcut.tex
\section{\fminstcut{} and \afminstcut{}}

Our algorithmic and hardness results for \fminstcut{} and \afminstcut{} exploit a representation theorem of minimum $s$--$t$ cuts due to Picard and Queyranne~\cite{PicardQ80}.

\subsection{Representing all minimum $s$--$t$ cuts}
Let $G = (V, E, \weight)$ be a connected edge-weighted graph and let $s, t \in V$ be distinct vertices.
An \emph{$s$--$t$ flow} of $G$ is a function $f\colon V \times V \to \mathbb R$ such that
\begin{itemize}
    \item $f(u, v) = -f(v, u)$ for $u, v \in V$;
    \item $|f(u, v)| \le \weight(uv)$ for $uv \in E$ and $f(u, v) = 0$ for $uv \notin E$;
    \item $\sum_{v \in V}f(u, v) = 0$ for $u \in V \setminus \{s,t\}$.
\end{itemize}
When $s$ and $t$ are clear from the context, we may simply call it a flow.
The \emph{value} of a flow $f$ is defined as $\sum_{v \in V} f(s, v)$.
Let $f$ be a maximum flow in $G$ with the weight function $\weight$.
We construct a directed graph $G_f$ on the same vertex set $V$ such that for $uv \in E$, $G_f$ contains an arc $(u, v)$ if $f(u, v) = \weight(uv)$; an arc $(v, u)$ if $f(v, u) = \weight(uv)$; and two anti-parallel arcs $(u, v)$ and $(v, u)$ if $|f(u, v)| < \weight(uv)$.
Equivalently, $G_f$ is the reverse of the residual digraph of $(G, f)$.
A vertex set $X \subseteq V$ is called \emph{closed} in $G_f$ if there is no arc from $V \setminus X$ to $X$.
The max-flow min-cut theorem and the result of Picard and Queyranne~\cite{PicardQ80} imply the following characterization.
Note that a maximum flow $f$ and the graph $G_f$ can be computed in $\bigoh(nm)$ time~\cite{KingRT94,Orlin13}.

\begin{theorem}[\cite{PicardQ80}]\label{thm:minstcut_representation}
    A cut $\{X, V(G) \setminus X\}$ with $s \in X$ and $t \notin X$ in $G$ is a minimum $s$--$t$ cut of $G$ if and only if $X$ is closed in $G_f$.
\end{theorem}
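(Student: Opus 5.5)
We prove both directions of \cref{thm:minstcut_representation} via the max-flow min-cut theorem: $f$ has value $\lambda_{st}(G)$, and for every $s$--$t$ cut $\{X,V\setminus X\}$ with $s\in X$, $t\notin X$ the flow conservation constraints give
\begin{align*}
  \sum_{v\in V} f(s,v) = \sum_{u\in X,\, v\notin X} f(u,v).
\end{align*}
This holds because summing $\sum_{v} f(u,v)$ over $u\in X$ leaves only the contribution of $s$ (all other vertices of $X$ are conserving), and antisymmetry cancels the terms with both endpoints in $X$. Hence $\lambda_{st}(G)=\sum_{u\in X, v\notin X} f(u,v)$ for every such $X$.

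\emph{Closed implies minimum.} Suppose $X$ is closed in $G_f$. Fix an edge $uv$ with $u\in X$ and $v\notin X$. If $|f(u,v)|<\weight(uv)$, then $G_f$ contains the arc $(v,u)$ from $V\setminus X$ to $X$, contradicting closedness. If $f(v,u)=\weight(uv)$, then $G_f$ contains the arc $(v,u)$, again a contradiction. Hence $f(u,v)=\weight(uv)$ for every crossing edge. The displayed identity then gives $\lambda_{st}(G)=\weight(X,V\setminus X)$, so the cut is minimum.

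\emph{Minimum implies closed.} Conversely, let $\weight(X,V\setminus X)=\lambda_{st}(G)$. The identity gives $\sum_{u\in X,v\notin X} f(u,v)=\weight(X,V\setminus X)$, while each term satisfies $f(u,v)\le\weight(uv)$. Therefore every term is tight: $f(u,v)=\weight(uv)$ for each crossing edge with $u\in X$. By the definition of $G_f$ (weights are positive, so the three cases are disjoint), such an edge contributes only the arc $(u,v)$, which goes from $X$ to $V\setminus X$. Non-edges contribute no arcs. Hence there is no arc from $V\setminus X$ to $X$, and $X$ is closed.

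The only delicate point is the identity relating flow value to net crossing flow. It is a routine telescoping argument over conservation and antisymmetry, so I expect no real obstacle. The remaining step is checking that the orientation convention of $G_f$ (the reverse of the residual digraph) matches closedness as "no arc entering $X$", which the case analysis above does.
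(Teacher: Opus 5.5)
Your proof is correct. The paper gives no argument of its own for this statement. It only remarks that the statement follows from the max-flow min-cut theorem and the work of Picard and Queyranne, and cites~\cite{PicardQ80}.

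You instead give a self-contained complementary-slackness argument. The conservation identity shows that every $s$--$t$ cut carries net flow equal to the value of $f$. Hence a cut is minimum exactly when every crossing edge is saturated in the direction from $X$ towards $V(G)\setminus X$. Because $G_f$ is the reversed residual digraph, and positive weights make the three cases of its definition mutually exclusive, this is precisely the condition that no arc of $G_f$ enters $X$.

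Your route buys self-containment, since only max-flow min-cut is invoked. It also makes visible an asymmetry that the citation hides. The direction ``closed implies minimum'' uses only weak duality and holds for any feasible flow $f$; in fact it forces such an $f$ to be maximum. The direction ``minimum implies closed'' is where maximality of $f$, via max-flow min-cut, is genuinely needed.
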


Another consequence of \cref{thm:minstcut_representation} is that for every directed walk $W$ from $s$ to $t$ in $G_f$, no two (undirected) edges corresponding to arcs in $W$ are simultaneously contained in any minimum $s$--$t$ cut of $G$.
\begin{corollary}\label{cor:minstcut_digraph}
    Let $W$ be an arbitrary directed walk from $s$ to $t$ in $G_f$ and let $W'$ be the undirected walk between $s$ and $t$ in $G$ that corresponds to $W$.
    Then, $W'$ contains exactly one edge of a cutset $S$ for any $S \in \Lambda_{st}(G)$.
\end{corollary}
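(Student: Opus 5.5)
Let $S=\cutset_G(X)\in\Lambda_{st}(G)$ with $s\in X$, $t\notin X$. By \cref{thm:minstcut_representation}, $X$ is closed in $G_f$: no arc of $G_f$ goes from $V\setminus X$ into $X$. We first count how often a directed $s$--$t$ walk $W$ crosses between the shores. Since $s\in X$ and $t\notin X$, $W$ leaves $X$ at least once. Once $W$ is in $V\setminus X$, closedness forbids it from re-entering $X$. Hence $W$ has exactly one arc from $X$ to $V\setminus X$ and none in the reverse direction, so $W$ crosses the cut exactly once.

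It remains to translate this into the undirected walk $W'$. Each arc of $W$ corresponds to an edge of $G$, and an arc joins vertices on different shores exactly when its edge lies in $S=E_G(X,V\setminus X)$. So the traversals of $W'$ through edges of $S$ correspond bijectively to the crossing arcs of $W$, and there is exactly one of them. Thus $W'$ contains exactly one edge of $S$, counted with multiplicity of traversal. If $W'$ happened to use the same edge twice, that would mean two crossings, which we have just excluded. So the statement holds in both the multiset and the set sense.

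The only subtle point, and the step to check carefully, is that arcs of $G_f$ come from edges of $G$ and reflect their endpoints, including the anti-parallel pairs. An arc $(u,v)$ with $u\notin X$ and $v\in X$ therefore always comes from an edge in $S$, and it is exactly these arcs that closedness rules out. With this correspondence fixed, the argument is the single crossing count above, applied to an arbitrary $S\in\Lambda_{st}(G)$.
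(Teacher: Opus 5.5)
Your proposal is correct and follows the paper's own argument: take the $s$-shore $X$ of $S$, use \cref{thm:minstcut_representation} to get that $X$ is closed in $G_f$, and conclude that $W$ leaves $X$ exactly once and never re-enters. You spell out two points the paper leaves implicit: the correspondence between crossing arcs of $W$ and traversals of edges of $S$ in $W'$, and why no edge of $S$ can be traversed twice.
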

\begin{proof}
    Let $X$ be the $s$-shore of the cutset $S$.
    By~\cref{thm:minstcut_representation}, $X$ is closed in $G_f$, meaning that the walk $W$ crosses $S$ exactly once.
\end{proof}

Observe that for each strongly connected component $C$ of $G_f$, either $C \subseteq X$ or $C \cap X = \emptyset$ for every closed set $X \subseteq V(G)$.
An arc in $G_f$ is said to be \emph{relevant} if its endpoints belong to distinct strongly connected components in $G_f$.
The strongly connected component that contains $s$ (resp.~$t$) is denoted by $C_s$ (resp.~$C_t$).
The following observation follows from the assumption that $G$ is connected.
\begin{observation}\label{obs:unique-source-sink}
    For every strongly connected component $C$ of $G_f$ with $C \neq C_s$, there is at least one relevant arc incoming to a vertex in $C$.
    Symmetrically, for every strongly connected component $C$ of $G_f$ with $C \neq C_t$, there is at least one relevant arc outgoing from a vertex in $C$.
    Consequently, every vertex in $G_f$ is reachable from $s$ and can reach $t$.
\end{observation}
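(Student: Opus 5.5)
The plan is a flow-conservation argument applied to a strongly connected component with no incoming relevant arc. The second statement follows by symmetry, and the reachability claim by walking through the condensation DAG of $G_f$.

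Suppose, for contradiction, that some strongly connected component $C\neq C_s$ of $G_f$ has no incoming relevant arc. Since every arc of $G_f$ that enters $C$ from $V\setminus C$ would be relevant, no arc of $G_f$ enters $C$ at all. I then look at the edges of $G$ between $C$ and $V\setminus C$.
\begin{itemize}
\item Such edges exist: $C\neq V$ because $s\notin C$, and $G$ is connected.
\item For any such edge $uv$ with $u\in C$, $v\notin C$, the case $|f(u,v)|<\weight(uv)$ would put the arc $(v,u)$ into $G_f$. So would $f(v,u)=\weight(uv)$. Hence $f(u,v)=\weight(uv)>0$, since $\weight$ is positive.
\end{itemize}
Consequently the net flow $\sum_{u\in C,\,v\notin C} f(u,v)$ leaving $C$ is strictly positive.

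On the other hand, antisymmetry makes the terms with both endpoints in $C$ cancel. So the net flow leaving $C$ equals $\sum_{u\in C}\sum_{v\in V} f(u,v)$. By conservation only $s$ and $t$ can contribute to this sum, and $s\notin C$. If $t\notin C$, the sum is $0$. If $t\in C$, the sum is $\sum_{v} f(t,v)$. This equals minus the flow value, because summing the outflow over all vertices gives $0$ by antisymmetry. The value is nonnegative because $f$ is maximum and the zero flow is feasible, so the sum is at most $0$. Either case contradicts positivity. The outgoing statement for $C\neq C_t$ is symmetric: a component with no outgoing arc forces all boundary edges to carry flow into $C$, and the same conservation argument rules this out using $t\notin C$.

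For the consequence, I pass to the condensation of $G_f$, which is acyclic and finite. Starting from the component of any vertex $x$, I repeatedly follow an incoming relevant arc backwards. This is possible until $C_s$ is reached, and it must terminate because there are no cycles and finitely many components. Inside each component any two vertices are mutually reachable, so concatenating gives a directed walk from $s$ to $x$. Symmetrically, following outgoing relevant arcs from $x$ reaches $C_t$, giving a walk from $x$ to $t$.

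The only delicate point is the sign bookkeeping when $t\in C$, namely checking that the net outflow at $t$ is nonpositive. Everything else is direct.
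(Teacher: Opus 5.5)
Your proof is correct and uses the same flow-conservation argument as the paper. A component $C\neq C_s$ with no entering arc forces every boundary edge to be saturated outward, which contradicts that the net outflow of $C$ is $0$ (if $t\notin C$) or minus the flow value (if $t\in C$); the outgoing case follows by symmetry, and you also spell out what the paper leaves implicit: existence of boundary edges via connectivity, the sign of the flow value, and the condensation-DAG argument for the reachability consequence.
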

\begin{proof}
    If $G_f$ has a strongly connected component $C \neq C_s$ that has no incoming relevant arc, for every edge $uv$ with $u \in C$ and $v \notin C$, we have $f(u, v) = \weight(uv) > 0$.
    Hence, the total flow value outgoing from $C$ is positive.
    On the other hand, as $s \notin C$, the flow conservation condition implies that this flow value is $0$ if $t  \notin C$ and negative if $t \in C$.
    This is a contradiction.
    The second statement is symmetric.
\end{proof}

A vertex $v$ is said to be \emph{free} if $v \notin C_s \cup C_t$.
Note that if an arc $e = (u, v)$ is relevant, there is no opposite arc $(v, u)$ in $G_f$.
Thus, in the following, we do not distinguish the underlying edge of $G$ from the relevant arc in $G_f$.
We say that an arc $e = (u, v)$ \emph{dominates} a vertex $w$ (or $w$ \emph{is dominated by} $e$) if there is a walk from $s$ to $t$ in $G_f$ passing through both $w$ and $e$.
\begin{lemma}\label{lem:minstcut:forcing}
    Let $X$ be the $s$-shore of a minimum $s$--$t$ cut of $G$ and let $F \subseteq \cutset_G(X)$.
    Then, $F$ is a forcing set for $\Lambda_{st}(G)$ if and only if every free vertex is dominated by some edge in $F$ in $G_f$.
\end{lemma}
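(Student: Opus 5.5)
The plan is to turn the forcing condition into a statement about closed sets in $G_f$ by means of \cref{thm:minstcut_representation}, and then to show that the unique-solution condition amounts to a covering of all vertices by ancestors of tails and descendants of heads of $F$.

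\emph{Step 1: the edges of $F$ are relevant arcs leaving $X$.} By \cref{thm:minstcut_representation}, $X$ is closed in $G_f$. Take an edge $uv\in\cutset_G(X)$ with $u\in X$ and $v\notin X$. Closedness of $X$ forbids the arc $(v,u)$, so $G_f$ contains only the arc $(u,v)$. Moreover, $u$ and $v$ lie in different strongly connected components, because a closed set never splits such a component. Hence every $e\in F$ is a relevant arc $(u_e,v_e)$ with tail in $X$ and head outside $X$.

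\emph{Step 2: describe every cut containing $F$.} A set $Y$ is closed exactly when it is closed under predecessors, that is, $w\in Y$ and $(w',w)\in E(G_f)$ imply $w'\in Y$. Let $L$ be the set of vertices that can reach $s$ or some tail $u_e$ ($e\in F$). Let $U$ be the set of vertices reachable from $t$ or from some head $v_e$. By Step 1, a minimum $s$--$t$ cut with $s$-shore $Y$ satisfies $F\subseteq\cutset_G(Y)$ if and only if $Y$ is closed, $s,u_e\in Y$, and $t,v_e\notin Y$ for all $e\in F$. Since $Y$ is predecessor-closed and $V\setminus Y$ is successor-closed, this is equivalent to the following three conditions:
\begin{itemize}
  \item $Y$ is closed;
  \item $L\subseteq Y$;
  \item $Y\cap U=\emptyset$.
\end{itemize}
Both $L$ and $V\setminus U$ are themselves closed, and they contain $s$ but not $t$. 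The set $X$ witnesses $L\cap U=\emptyset$, so $L\subseteq V\setminus U$ and both sets are valid $s$-shores. Consequently, $F$ is forcing if and only if $L=V\setminus U$, i.e., $L\cup U=V$. If $L\ne V\setminus U$, the sets $L$ and $V\setminus U$ give two distinct minimum cuts containing $F$. If $L=V\setminus U$, the sandwich forces $Y=X$. Since $G$ is connected, distinct $s$-shores give distinct cutsets.

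\emph{Step 3: translate the covering into domination.} A vertex $w$ lies in $L\cup U$ iff it reaches $s$, or is reached from $t$, or reaches some $u_e$, or is reached from some $v_e$. Vertices reaching $s$ lie in $C_s$, since $s$ reaches every vertex by \cref{obs:unique-source-sink}. Symmetrically, vertices reached from $t$ lie in $C_t$, and conversely all of $C_s\cup C_t$ is covered. For a free vertex $w$, I would use \cref{obs:unique-source-sink} to extend paths: every vertex is reachable from $s$ and can reach $t$. Then "$w$ reaches $u_e$" is equivalent to the existence of an $s$--$t$ walk through $w$ and then $e$. Likewise, "$v_e$ reaches $w$" is equivalent to an $s$--$t$ walk through $e$ and then $w$. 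Conversely, any $s$--$t$ walk through $w$ and $e$ visits one of them first, giving one of these two reachabilities. Hence $w\in L\cup U$ iff $w$ is dominated by some $e\in F$, which proves the lemma.

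The main obstacle I expect is Step 2: checking that $L$ and $V\setminus U$ are really closed, contain $s$ and not $t$, and are consistent with $F$. This is where I would lean on $X$ itself to get $L\cap U=\emptyset$.
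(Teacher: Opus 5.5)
Your proof is correct, but it is organized differently from the paper's. The paper proves the two directions separately, each by contradiction:
\begin{itemize}
\item For the forward direction, it takes an undominated free vertex $v$ (w.l.o.g.\ in $X$) and removes from $X$ the set $Z$ of vertices reachable from $v$ inside $G_f[X]$. It then checks that $X\setminus Z$ is another closed $s$-shore whose cutset still contains $F$.
\item For the converse, it assumes a second shore $X'$ and picks $w\in X\symdif X'$. It shows that an $s$--$t$ walk through $w$ and a dominating arc $(a,b)$ with $ab\in F$ must enter $X$ or $X'$ from outside.
\end{itemize}

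You instead give one global description. The $s$-shores of minimum $s$--$t$ cuts whose cutsets contain $F$ are exactly the closed sets $Y$ with $L\subseteq Y\subseteq V\setminus U$. Hence uniqueness is equivalent to $L\cup U=V$. Your Step~3 then identifies $V\setminus(L\cup U)$ with the set of free vertices not dominated by $F$.

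What each route buys:
\begin{itemize}
\item Your route needs a little more setup: Step~1, and checking that $L$ and $V\setminus U$ are valid shores, which you correctly obtain by using $X$ itself to get $L\cap U=\emptyset$.
\item In return, it gives an explicit description of the smallest and largest minimum cuts containing $F$.
\item It treats both sides of the cut at once rather than ``by symmetry.''
\item It makes explicit that edges of $\cutset_G(X)$ are relevant arcs directed out of $X$. This is what makes ``dominated by an edge of $F$'' well defined, and the paper uses it only implicitly.
\item The paper's argument is shorter and more local. It exhibits a concrete competing shore $X\setminus Z$ nested inside $X$, and it rules out an arbitrary second shore by a single walk-crossing argument.
\end{itemize}
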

\begin{proof}
    Suppose that $F$ is a forcing set for $\Lambda_{st}(G)$ with $F \subseteq \cutset_G(X)$.
    Suppose to the contrary that $G_f$ has a free vertex $v$ that is not dominated by any edge in $F$.
    We can assume that $v \in X$ due to symmetry.
    Let $Z$ be the set of vertices in $X$ that are reachable from $v$ in $G_f[X]$.

    \begin{claim}
        $X \setminus Z$ is closed in $G_f$.
    \end{claim}
    \begin{claimproof}
        Suppose that there is an arc $e$ from $(V(G) \setminus X) \cup Z$ to $X \setminus Z$ in $G_f$.
        Since $X$ is closed, this arc $e$ is directed from $Z$ to $X \setminus Z$.
        Since $Z$ is the set of vertices in $G_f[X]$ that are reachable from $v$, the head of $e$, which is contained in $X \setminus Z$, is reachable, leading to a contradiction.
    \end{claimproof}

    Since $v \notin C_s$, we have $s \notin Z$.
    This implies that, by~\cref{thm:minstcut_representation}, $X \setminus Z$ is an $s$-shore of a minimum $s$--$t$ cut of $G$.
    Since $v$ is not dominated by any edge in $F$, $\cutset_{G}(X\setminus Z)$ contains $F$.
    This contradicts the fact that $F$ is a forcing set.

    Conversely, suppose that $G$ has an edge set $F$ that is contained in $\cutset_G(X)$ and every free vertex is dominated by some arc in $F$.
    Suppose to the contrary that there is a closed $s$-shore $X' \neq X$ that satisfies $F \subseteq \cutset_G(X')$.
    For each $uv \in F \cap \cutset_{G}(X)$, $G_f$ has an arc $(u, v)$, and hence $u \in X$ and $v \notin X$.
    As $uv \in \cutset_G(X')$, we have $u \in X \cap X'$ and $v \notin X \cup X'$.
    Choose an arbitrary vertex $w \in X \symdif X'$.
    Since $C_s$ is contained in both $X$ and $X'$ and $C_t$ is disjoint from both, $w$ is free.
    Then, as $w$ is dominated by some edge $ab \in F$, $G_f$ has a directed walk that contains $w$ and $(a, b)$ with $a \in X$ and $b \notin X$.
    If $w$ occurs before $(a, b)$ in this walk, then the subwalk from $w$ to $a$ enters one of $X$ and $X'$ from outside.
    If $w$ occurs after $(a, b)$, then the subwalk from $b$ to $w$ enters the other set from outside.
    In both cases, at least one of $X$ and $X'$ is not closed in $G_f$, deriving a contradiction.
\end{proof}

We next establish some properties of anti-forcing sets for minimum $s$--$t$ cuts that will be used in the following proofs.
By~\cref{thm:minstcut_representation}, each closed set $X$ with $s \in X$ and $t \notin X$ corresponds to a minimum $s$--$t$ cut $\{X, V(G) \setminus X\}$ of $G$, and vice-versa.
Moreover, for every anti-forcing set $F \subseteq E(G) \setminus \cutset_G(X)$, it hits $\cutset_G(X')$ for every minimum $s$--$t$ cut $\{X', V(G) \setminus X'\}$ with $X' \neq X$.
Thus, we have the following proposition.

\begin{proposition}\label{apd:prop:minstcut:anti-forcing}
    Let $X$ be the $s$-shore of a minimum $s$--$t$ cut of $G$ and let $F \subseteq E(G) \setminus \cutset_G(X)$.
    Then, $F$ is an anti-forcing set for $\Lambda_{st}(G)$ if and only if $\cutset_G(X') \cap F \neq \emptyset$ for every closed set $X' \subseteq V(G)$ such that $s \in X'$, $t \notin X'$, and $X' \neq X$.
\end{proposition}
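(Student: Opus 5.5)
This is essentially an unfolding of definitions combined with \cref{thm:minstcut_representation}, so I will prove the two directions separately.

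\emph{Setup.} By \cref{thm:minstcut_representation}, the map $X' \mapsto \cutset_G(X')$ is a surjection from the family $\mathcal{C}$ of closed sets $X'$ with $s \in X'$ and $t \notin X'$ onto $\Lambda_{st}(G)$. It is also injective. Since $G$ is connected and $\weight$ is positive, the cutset of a minimum $s$--$t$ cut determines its shores, as noted in the preliminaries. Concretely, $G - \cutset_G(X')$ has exactly two components, and the $s$-shore is the one containing $s$. So distinct closed sets $X' \neq X$ in $\mathcal{C}$ give distinct cutsets $\cutset_G(X') \neq \cutset_G(X)$.

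\emph{Forward direction.} Suppose $F$ is an anti-forcing set. The hypothesis $F \subseteq E(G) \setminus \cutset_G(X)$ says that $\cutset_G(X)$ is a member of $\Lambda_{st}(G)$ avoiding $F$. By the definition of an anti-forcing set, it is the unique such member. Now take any $X' \in \mathcal{C}$ with $X' \neq X$. Then $\cutset_G(X') \in \Lambda_{st}(G)$, and by injectivity $\cutset_G(X') \neq \cutset_G(X)$. Uniqueness therefore forces $\cutset_G(X') \cap F \neq \emptyset$.

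\emph{Backward direction.} Suppose the hitting condition holds. Then $\cutset_G(X)$ avoids $F$ by assumption. Any other $S \in \Lambda_{st}(G)$ has the form $S = \cutset_G(X')$ for some $X' \in \mathcal{C}$, by surjectivity. Since $S \neq \cutset_G(X)$, we have $X' \neq X$, so $S$ meets $F$ by the hitting condition. Hence $\cutset_G(X)$ is the unique cutset in $\Lambda_{st}(G)$ avoiding $F$, and $F$ is anti-forcing.

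The only point needing care is the injectivity, that is, passing between "distinct closed sets" and "distinct cutsets". This is exactly the connectivity and positivity assumption fixed in the preliminaries, so I expect no real obstacle.
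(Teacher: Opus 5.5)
Your proof is correct and follows the paper's route. The paper also derives the proposition directly from the Picard--Queyranne correspondence between closed $s$-shores and minimum $s$--$t$ cuts together with the definition of an anti-forcing set. Your explicit check that distinct closed shores give distinct cutsets, which is needed only in the forward direction, spells out what the paper leaves implicit through the connectivity remark in the preliminaries.
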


This proposition requires $F$ to hit the cutset $\cutset_G(X')$ of any closed set $X'$ distinct from $X$ with $s \in X'$ and $t \notin X'$. 
However, by exploiting the properties of minimum $s$--$t$ cuts, it suffices to hit those with a shore strictly contained in either $X$ or $V(G)\setminus X$.
Before proving this, we observe that, for vertex sets $X, X' \subseteq V(G)$, every arc incoming to $X \cap X'$ is also incoming to $X$ or $X'$.
Thus, we have the following corollary.

\begin{corollary}\label{cor:union-intersection}
    Let $X$ and $X'$ be $s$-shores of minimum $s$--$t$ cuts of $G$.
    Then $X \cap X'$ is also an $s$-shore of a minimum $s$--$t$ cut of $G$. 
\end{corollary}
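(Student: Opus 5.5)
By \cref{thm:minstcut_representation}, a set $Y$ with $s \in Y$ and $t \notin Y$ is the $s$-shore of a minimum $s$--$t$ cut exactly when $Y$ is closed in $G_f$. So the plan is to show that $X \cap X'$ is closed in $G_f$. Membership of the terminals is immediate: $s \in X \cap X'$ because $s$ lies in both sets, and $t \notin X \cap X'$ because $t$ lies in neither.

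For closedness, we use the observation stated just before the corollary. Suppose $(a,b)$ is an arc of $G_f$ with $a \notin X \cap X'$ and $b \in X \cap X'$. Then $a \notin X$ or $a \notin X'$.
\begin{itemize}
\item If $a \notin X$, then $(a,b)$ enters $X$ from outside, since $b \in X$.
\item If $a \notin X'$, then $(a,b)$ enters $X'$ from outside, since $b \in X'$.
\end{itemize}
In either case this contradicts the fact that both $X$ and $X'$ are closed, which holds by \cref{thm:minstcut_representation}. Hence no arc enters $X \cap X'$, so $X \cap X'$ is closed.

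Applying \cref{thm:minstcut_representation} in the reverse direction, $X \cap X'$ is the $s$-shore of a minimum $s$--$t$ cut. The argument has no real obstacle; the only point to check is that $X \cap X'$ is a legitimate $s$-shore, which is exactly the terminal condition verified first. The same reasoning applied to arcs entering $X \cup X'$ would give the analogous statement for unions.
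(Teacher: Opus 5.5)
Your proof is correct and is essentially the paper's argument. By the Picard--Queyranne characterization (\cref{thm:minstcut_representation}), it suffices that $X \cap X'$ contains $s$, excludes $t$, and is closed in $G_f$, and any arc entering $X \cap X'$ would enter $X$ or $X'$.
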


\begin{lemma}\label{apd:lem:minstcut:anti-forcing}
    Let $X$ be the $s$-shore of a minimum $s$--$t$ cut of $G$ and let $F \subseteq E(G) \setminus \cutset_G(X)$.
    Then, $F$ is an anti-forcing set for $\Lambda_{st}(G)$ if and only if $\cutset_G(X') \cap F \neq \emptyset$ for every closed set $X' \subseteq V(G)$ with $s \in X'$ and $t \notin X'$ satisfying either $X' \subsetneq X$ or $X \subsetneq X'$.
\end{lemma}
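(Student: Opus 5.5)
The forward direction is immediate from \cref{apd:prop:minstcut:anti-forcing}: closed sets $X'$ with $X'\subsetneq X$ or $X\subsetneq X'$ are in particular closed sets different from $X$, so an anti-forcing set must hit their cutsets. The work is in the converse. Assume $F\subseteq E(G)\setminus\cutset_G(X)$ hits $\cutset_G(X')$ for every comparable closed $X'\neq X$. We must show that $F$ also hits $\cutset_G(X')$ for an arbitrary closed $X'\neq X$ with $s\in X'$, $t\notin X'$.

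The idea is to pass to the comparable closed sets $X\cap X'$ and $X\cup X'$. By \cref{cor:union-intersection}, $X\cap X'$ is closed. Symmetrically, $X\cup X'$ is closed, since every arc entering $X\cup X'$ enters $X$ or $X'$. Both contain $s$ and avoid $t$. Since $X'\neq X$, at least one of $X\cap X'\subsetneq X$ or $X\subsetneq X\cup X'$ holds. The first fails exactly when $X\subseteq X'$, and then $X\subsetneq X'=X\cup X'$.

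\emph{Case $X\cap X'\neq X$.} By hypothesis, $F$ contains an edge $e=ab\in\cutset_G(X\cap X')$, say $a\in X\cap X'$ and $b\notin X\cap X'$. Because $e\notin\cutset_G(X)$ and $a\in X$, we get $b\in X$. Hence $b\in X\setminus X'$, so $e\in\cutset_G(X')$, as required.

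\emph{Case $X\subsetneq X\cup X'$.} By hypothesis, $F$ contains $e=ab\in\cutset_G(X\cup X')$ with $a\in X\cup X'$ and $b\notin X\cup X'$. Since $b\notin X$ and $e\notin\cutset_G(X)$, we get $a\notin X$, so $a\in X'\setminus X$. Then $e$ joins $X'$ to its complement, so $e\in\cutset_G(X')$.

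In either case $F$ hits $\cutset_G(X')$, and \cref{apd:prop:minstcut:anti-forcing} then shows that $F$ is an anti-forcing set.

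The only step needing any care is confirming that the union of two closed sets is closed, which the paper states only for intersections. It follows from the same observation: an arc entering $X\cup X'$ has its tail outside both sets, so it enters $X$ or $X'$. One might worry about when each case applies, but we only need one case to hold, and the dichotomy above guarantees that.
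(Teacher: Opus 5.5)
Your proof is correct and follows the paper's argument: pass to the closed set $X\cap X'$ (\cref{cor:union-intersection}), and use $F\cap\cutset_G(X)=\emptyset$ to force the hitting edge into $E_G(X\cap X',X\setminus X')\subseteq\cutset_G(X')$. Your union case is valid but not needed: if $X\cap X'=X$ then $X\subsetneq X'=X\cup X'$, and the hypothesis applies directly to $X'$, which is why the paper only treats incomparable $X'$ and never needs closure under unions.
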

\begin{proof}
    The forward implication is clear due to \cref{apd:prop:minstcut:anti-forcing}.
    Thus, suppose that $F$ satisfies $\cutset_{G}(X') \cap F \neq \emptyset$ for every closed set $X'$ satisfying either $X' \subsetneq X$ or $X \subsetneq X'$.
    Let $Y$ be an arbitrary closed set with $s \in Y$ and $t \notin Y$ such that both $X \setminus Y$ and $Y \setminus X$ are nonempty.
    By~\cref{thm:minstcut_representation,cor:union-intersection}, $X \cap Y$ is also closed with $s \in X \cap Y$ and $t \notin X \cap Y$.
    Since $X \cap Y \subsetneq X$, $F$ contains at least one edge from $\cutset_G(X \cap Y)$.
    As $\cutset_G(X) \cap F = \emptyset$, this edge belongs to $E_G(X \cap Y, X \setminus Y)$, and hence it belongs to $\cutset_{G}(Y)$.
    Thus, by~\cref{apd:prop:minstcut:anti-forcing}, $F$ is an anti-forcing set.
\end{proof}

\subsection{\NP-hardness of \fminstcut{} and \afminstcut{}}
In this section, we prove that \fminstcut{} and \afminstcut{} are both \NP-complete, and that both problems can be solved in polynomial time if a minimum $s$--$t$ cut is provided as input.
\begin{theorem}\label{thm:fmst:NPC} 
    \fminstcut{} is \NP-complete even for unweighted graphs.
\end{theorem}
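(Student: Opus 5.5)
The plan is a reduction from \textsc{3-SAT}, after reformulating the forcing condition combinatorially. By \cref{lem:minstcut:forcing}, $G$ has a forcing set of size at most $k$ if and only if there is a closed $s$-shore $X$ and a set $F\subseteq\cutset_G(X)$, $|F|\le k$, such that every free vertex is dominated by an arc of $F$. Every directed $s$--$t$ walk crosses $\cutset_G(X)$ exactly once (\cref{cor:minstcut_digraph}). Hence a free vertex $w\in X$ is dominated by $(a,b)\in F$ exactly when $w$ reaches $a$ in $G_f$. Symmetrically, $w\notin X$ is dominated exactly when $w$ is reachable from $b$. Since membership in \NP{} is clear, the whole difficulty is that we may choose $X$. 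The reduction must make the choice of $X$ encode an assignment, while the budget $k$ forces the few chosen cut arcs to cover all "clause" vertices.

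First I fix how to realize a prescribed digraph as $G_f$. I will build an acyclic digraph $D$ with source $s$ and sink $t$ that is the arc-disjoint union of directed $s$--$t$ paths. Let $G$ be its underlying unweighted multigraph, and let $f$ send one unit along each path. All edges at $s$ are saturated, so $f$ is a maximum flow. Every edge is saturated in its path direction, so $G_f=D$. As $D$ is acyclic, its strongly connected components are singletons, and every vertex other than $s,t$ is free. Parallel edges are then removed by (the second half of) \cref{lem:transform-unweighted-simple}, which preserves forcing sets. So it suffices to design such a $D$ and read off closed sets directly as down-closed sets containing $s$ and avoiding $t$.

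Gadgets. For each variable $x$ I will use a long directed chain $s\to\cdots\to t$, and identify two designated arcs $\tau_x$ (``true'') and $\varphi_x$ (``false''). Chain vertices before the chosen arc must reach its tail, and vertices after it must be reached from its head. I will arrange that one cut arc per variable dominates all of that variable's chain vertices precisely when the cut crosses the chain at $\tau_x$ or at $\varphi_x$. Parallel or duplicated chain segments will ensure that any other crossing point needs at least two arcs of $F$.

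For each clause $c$ I will add a vertex $w_c$ with paths $s\to w_c\to h(\tau_x)$ for every positive literal $x\in c$, and $w_c\to h(\varphi_x)$ for every negative literal. Here $h(\cdot)$ is a suitable vertex on the chain, placed so that $w_c$ is on the $s$-side and reaches the tail of the chosen literal arc exactly when that literal is set true. I set $k$ to the number of variables. Then a satisfying assignment gives $X$ and $F=\{\tau_x\text{ or }\varphi_x\}$, and the completeness direction is routine.

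The main obstacle is soundness. A solution might place the cut in odd positions, or spend extra arcs on some variables and none on others, or put clause vertices on the $t$-side and dominate them from heads. I would handle this by padding each chain so that every chain vertex is reachable only through that chain. This forces at least one arc of $F$ per variable by the domination condition, hence exactly one when $|F|\le k$. I would also attach each $w_c$ to $t$ only through the literal arcs, so that the domination characterization leaves no alternative.
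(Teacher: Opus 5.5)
Your reformulation of domination and your realization of an acyclic digraph as $G_f$ are fine. The proof breaks at the clause gadget, and this is not a matter of missing details: as described, it cannot encode both polarities of a literal. Whether $w_c$ reaches the tail of an arc is a static property of $G_f$, independent of the assignment. On your chain one designated arc precedes the other, say $\tau_x=(a_1,b_1)$ precedes $\varphi_x=(a_2,b_2)$, so $a_1$ reaches $a_2$. With the budget $k$ used up by the variables, $w_c$ must lie on the $s$-side, since on the $t$-side only the arc $sw_c$ could dominate it. It is therefore dominated if and only if it reaches the tail of an arc of $F$. Any $w_c$ that reaches $a_1$ also reaches $a_2$, so it is dominated whichever of $\tau_x,\varphi_x$ is in $F$. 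Conversely, a $w_c$ that does not reach $a_1$ is never dominated by $\tau_x$. So the literal attached to the earlier arc is satisfied either always or never, and soundness or completeness fails for clauses containing it.

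Duplicated segments do not change this. Suppose a single arc $(a,b)\in F$ dominates a whole gadget $Q$. Then every vertex of $Q$ reaches $a$ or is reached from $b$. One checks that any two such arcs are comparable (or parallel), and that $X\cap Q$ is exactly the set of vertices of $Q$ reaching $a$. Hence the admissible cut positions are totally ordered with nested shores, and domination of an $s$-side vertex is monotone in the cut position. Putting $\tau_x,\varphi_x$ on parallel branches instead forces a second arc for the other branch. Detecting the other polarity from the $t$-side by giving $w_c$ in-arcs from some chains fails as well: a vertex with an in-neighbour outside $X$ and an out-neighbour inside $X$ violates closedness. So a clause satisfied by literals of both kinds would admit no valid shore. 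In short, domination is monotone, and your plan has no mechanism for negation.

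Two further problems remain even apart from this. Your soundness paragraph is a list of intentions rather than an argument. Its requirement that every chain vertex be reachable only through its own chain also contradicts the arcs $w_c\to h(\cdot)$ you add into the chains.

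The paper avoids all of this by reducing from \textsc{Vertex Cover} on $4$-regular graphs, a monotone covering problem that matches the monotone nature of domination. Each vertex $v_i$ gets a gadget $s\to x_i\to z_i$, $s\to y_i\to z_i$. Each edge $e_j$ gets $M=2n+1$ vertices $W_j$, joined from $z_i$ for $v_i\in e_j$ and joined to $t$. The weights $2M$, $1$, $2$ make the maximum flow saturate every edge forwards, so $G_f$ is just this orientation. Cutting a gadget early at $sx_i,sy_i$ costs two arcs and dominates all $W_j$ of the incident edges. Cutting it late at one arc $z_iw$ costs one arc. The $M$ copies prevent covering any $W_j$ another way within budget $n+k$. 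Thus forcing sets of size $n+k$ correspond to vertex covers of size $k$, and no negation is ever needed.
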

The problem belongs to \NP{} due to \cref{lem:minstcut:forcing}.
We prove the theorem by showing a reduction from \textsc{Vertex Cover}.

\paragraph{Construction.}
Let $H = (V, E)$ be a graph with $n = |V|$ and $m = |E|$, and let $k$ be a positive integer with $k \le n$.
Since \textsc{Vertex Cover} is \NP-complete even when the input graph is 4-regular~\cite{FrickeHJ98}, we assume that $H$ is 4-regular.
We then construct an edge-weighted graph $G$ as follows.
Let $M = 2n + 1$.
For each vertex $v_i \in V$, $G$ has three vertices $x_i, y_i, z_i$ forming a path with an internal vertex $z_i$.
For each edge $e_j \in E$, $G$ has a set $W_j$ of $M$ vertices.
For each vertex $w \in W_j$, we add an edge between $z_i$ and $w$ if $e_j$ is incident to $v_i$ in $H$.
Finally, we add two vertices $s$ and $t$ in such a way that $s$ is adjacent to all $x_i$'s and all $y_i$'s, and $t$ is adjacent to all vertices in $W_j$ for all $j$.
Let $W = \bigcup_{1 \le j \le m} W_j$ and, for $1 \le i \le n$, let $B_i = \{sx_i, sy_i, x_iz_i, y_iz_i\}$.
See \cref{fig:fmst} for an illustration.
\begin{figure}[t]
    \centering
    \includegraphics[width=0.50\linewidth]{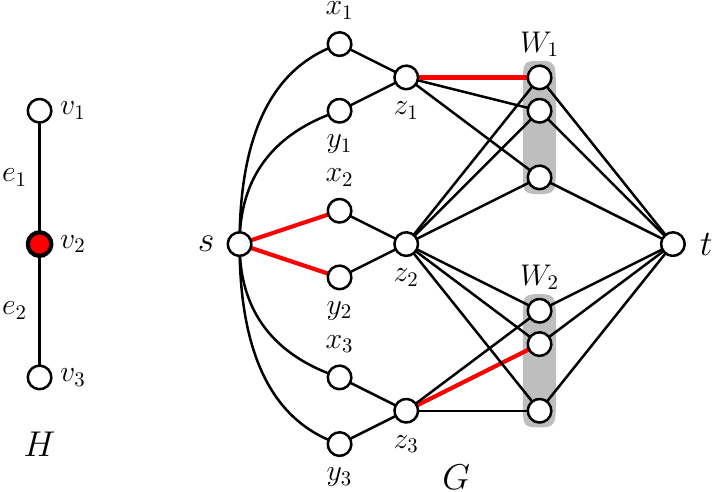}
    \caption{The construction of $G$ in \cref{thm:fmst:NPC}, using a graph $H$ that is not $4$-regular for simplicity.
    The red vertex corresponds to a vertex cover $C$ of $H$ and the set of red edges corresponds to a forcing set constructed from $C$.}
    \label{fig:fmst}
\end{figure}
We set the weight of edge $e$ in $G$ as
\begin{align*}
    \weight(e) = 
    \begin{cases}
        1 & e = z_iw \text{ for some } w \in W\\
        2 & e = wt \text{ for some } w \in W\\
        2M & \text{ otherwise }
    \end{cases}.
\end{align*}

Note that every $z_i$ is adjacent to exactly $4M$ vertices of $W$ as $H$ is $4$-regular.
Observe that the maximum $s$--$t$ flow $f$ is uniquely determined, with flow value $4nM$, that is, each edge $e$ of $G$ is saturated in the left-to-right direction.
Moreover, $G_f$ is obtained by orienting all edges in $G$ from left to right in \cref{fig:fmst}.
This implies that all vertices except for $s$ and $t$ are free, and all arcs are relevant in $G_f$.
In the following proof, we may interchangeably use $G$ and $G_f$, and when we refer to an edge in $G$, it may also refer to the corresponding arc in $G_f$, and vice-versa.
The correctness of the reduction is justified by the following lemma. 

\begin{lemma}\label{lem:fmst:NPC:correctness}
    $H$ has a vertex cover of size at most $k$ if and only if $G$ has a forcing set for $\Lambda_{st}(G)$ of size at most $n + k$. 
\end{lemma}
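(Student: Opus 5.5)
The plan is to use the characterization in \cref{lem:minstcut:forcing}. A set $F$ is a forcing set exactly when $F\subseteq\cutset_G(X)$ for some closed $s$-shore $X$ and every free vertex (here: every $x_i,y_i,z_i$ and every $w\in W$) is dominated by an arc of $F$. Since $G_f$ is the left-to-right orientation of $G$, domination is easy to read off. An arc $sx_i$ or $x_iz_i$ dominates $x_i$, $z_i$ and all of $W_j$ for $e_j\ni v_i$, but not $y_i$; symmetrically for $sy_i$ and $y_iz_i$. An arc $z_iw$ dominates $x_i,y_i,z_i,w$ and no other vertex of $W$. An arc $wt$ with $w\in W_j$, $e_j=v_iv_{i'}$, dominates $w$ together with $x_i,y_i,z_i,x_{i'},y_{i'},z_{i'}$. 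Every subset of $V\setminus\{t\}$ containing $s$ that is closed under predecessors is a valid shore.

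\emph{Forward direction.} Given a vertex cover $C$ with $|C|\le k$, I take
\[
X=\{s\}\cup\{x_i,y_i,z_i : v_i\notin C\}.
\]
This set is closed, since no arc enters it from outside: $W\cap X=\emptyset$, and the components outside $X$ only receive arcs from $s$. For $v_i\in C$ I put $sx_i,sy_i$ into $F$, and for $v_i\notin C$ I put one edge $z_iw$ into $F$. All of these lie in $\cutset_G(X)$. Then every $x_i,y_i,z_i$ is dominated. Every $w\in W_j$ is dominated by $sx_i$ for an endpoint $v_i\in C$ of $e_j$, which exists because $C$ is a cover. This gives $|F|=2|C|+(n-|C|)\le n+k$.

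\emph{Backward direction.} Let $F$ be a forcing set with $|F|\le n+k\le 2n<M$. The first step uses that each $W_j$ has $M$ vertices. The arcs $z_iw$ and $wt$ each dominate only one vertex of $W$, so some arc of $B_i\cap F$ with $v_i\in e_j$ must dominate $W_j$. Hence $C=\{v_i : F\cap B_i\neq\emptyset\}$ is a vertex cover. It then remains to show $|F|\ge n+|C|$, i.e.\ to charge at least two edges of $F$ to each $i\in C$ and at least one to each $i\notin C$. For each $i$ I define its group to be $B_i$ together with the edges $z_iw$. For $i\in C$, a single arc of $B_i$ dominates only one of $x_i,y_i$, so the other needs another arc, and similarly $y_i$ (or $x_i$) needs some arc for $i\notin C$.

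The main obstacle is the arcs $wt$. Such an arc lies in no group and can dominate $x,y,z$ of both endpoints of $e_j$ at once, so it might be double counted. I would handle this by charging each $wt$ with $w\in W_j$ to an endpoint of $e_j$. If both endpoints are in $C$, their $B$-edges are each in the cutset while $w\in X$. I would then use closedness of $X$ (a predecessor $z_i$ of $w$ lies in $X$, which forces the arcs of $B_i$ in the cutset to be arranged so that only one of $x_i,y_i$ is left needing domination) to argue that a single $wt$ can cover the missing vertex of at most one needy index per the counting. If that proves delicate, the fallback is an exchange argument. I would show that the edges $wt$ can be removed from $F$ by replacing $F$ with $F'$ for the shore $X\setminus\{w\}$, or by moving $z_i$ out, without increasing $|F|$ or destroying domination. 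After this, every edge of $F$ lies in some group, and the counting is straightforward.
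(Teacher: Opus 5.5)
Your forward direction and the first half of the backward direction (that $C=\{v_i : F\cap B_i\neq\emptyset\}$ is a vertex cover, via $|W_j|=M>|F|$) are correct and match the paper. There is a genuine gap in the lower bound $|F|\ge n+|C|$. You leave this step as a sketch, and the step it relies on is false.

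Domination alone cannot give the bound. For $e_j=v_iv_{i'}$ and $w\in W_j$, the two arcs $sx_i$ and $wt$ together dominate all of $x_i,y_i,z_i,x_{i'},y_{i'},z_{i'}$. Yet $v_i\in C$ and $v_{i'}\notin C$ would need three charged edges. So the argument must use that $F\subseteq\cutset_G(X)$ for a closed $X$, and your use of closedness is wrong. You claim that $z_i\in X$ ``forces the arcs of $B_i$ in the cutset to be arranged so that only one of $x_i,y_i$ is left needing domination.'' In fact $z_i\in X$ forces $x_i,y_i\in X$, so no arc of $B_i$ lies in $\cutset_G(X)$ at all. Your case ``both endpoints in $C$'' therefore cannot occur with $wt\in F$. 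The case that actually threatens double counting (one endpoint in $C$, one not) is not addressed. The fallback exchange argument is also unsupported: moving $w$ out of $X$ trades $wt$ for $z_iw$ and $z_{i'}w$, and neither one alone dominates the other endpoint's vertices.

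The missing idea is \cref{cor:minstcut_digraph}: every $s$--$t$ path in $G_f$ contains exactly one edge of $F$. Suppose $F\cap B_i$ contains $sx_i$ or $x_iz_i$. Then no arc $z_iw$ or $wt$ with $w$ adjacent to $z_i$ can lie in $F$, since it would share the path $s\,x_i\,z_i\,w\,t$ with that edge. Hence $y_i$ must be dominated by $sy_i$ or $y_iz_i$, so $|B_i\cap F|\ge 2$ for every $v_i\in C$. Equivalently, pushing your closedness argument one step further: $wt\in F$ with $w\in W_j$ puts $x_i,y_i,x_{i'},y_{i'}$ into $X$, so both endpoints of $e_j$ lie outside $C$, contradicting that $C$ is a vertex cover. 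Thus $F$ contains no arc $wt$ at all. Each $v_i\notin C$ then needs an arc $z_iw$ to dominate $x_i$, these lie in distinct groups, and your count $2|C|+(n-|C|)\le|F|$ goes through. The paper instead allows the private edge of $v_i\notin C$ to be any edge at $W$. It argues this edge is not shared, since $V(H)\setminus C$ is independent.
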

\begin{proof}
    Suppose first that $H$ has a vertex cover $C$ of size at most~$k$.
    Let $F$ be the set of edges defined as follows.
    For each $v_i \in C$, $F$ includes two edges $sx_i$ and $sy_i$.
    For each $v_i \notin C$, $F$ includes a single edge $z_iw$ for an arbitrarily chosen $w \in W_j$ for some edge $e_j$ incident to $v_i$ in $H$.
    As $|C| \le k$, we have $|F| = 2|C| + n - |C| \le n + k$, and hence it suffices to show that $F$ is a forcing set.

    Let $X = \{s\} \cup \{x_i, y_i, z_i \colon v_i \notin C\}$.
    Since there are no arcs from $V(G) \setminus X$ to $X$ in $G_f$, by~\cref{thm:minstcut_representation}, $\{X, V(G) \setminus X\}$ is a minimum $s$--$t$ cut of $G$ and, moreover, we have $F \subseteq \cutset_G(X)$.
    We then claim that $F$ is a forcing set for $\Lambda_{st}(G)$.
    For $v_i \in V$, the vertices $x_i, y_i, z_i$ are dominated by arcs in $F$, as $F$ contains $sx_i$ and $sy_i$ if $v_i \in C$; $F$ contains $z_iw$ for some $w \in W$ if $v_i \notin C$.
    Moreover, for $e_\ell \in E$ with $e_\ell = v_iv_j$, $F$ contains at least one edge in $B_i \cup B_j$ since $C$ is a vertex cover of $H$.
    Thus, the vertices in $W_\ell$ are dominated by arcs in $F$, and hence, by~\cref{lem:minstcut:forcing}, $F$ is a forcing set.

    Suppose next that $G$ has a forcing set $F$ of size at most $n + k$.
    Let $\{X, V(G) \setminus X\}$ be the unique minimum $s$--$t$ cut such that $F \subseteq \cutset_{G}(X)$.
    We first observe that for each edge $e_\ell \in E$ with $e_\ell = v_iv_j$, $F \cap (B_i \cup B_j) \neq \emptyset$.
    This can be seen as follows.
    Since $|W_\ell| \ge 2n + 1 > n + k$, there is a vertex $w \in W_\ell$ such that every edge incident to it is not contained in $F$.
    As $w$ is free and dominated by some edge in $F$ in $G_f$ due to \cref{lem:minstcut:forcing}, $F$ contains at least one edge in $B_i \cup B_j$.
    Suppose that $B_i \cap F \neq \emptyset$.
    We can assume that $e \in B_i \cap F$ is incident to $x_i$.
    Let $f \in F$ be an edge dominating $y_i$. 
    If $f \notin B_i$, $f$ belongs to a path $P$ from $z_i$ to $t$ in $G_f$.
    By concatenating the path from $s$ to $z_i$ through $x_i$ with $P$, we have a path from $s$ to $t$ containing at least two edges in $F$, contradicting \cref{cor:minstcut_digraph}.
    
    Now, we define a vertex set $C = \{v_i \colon |B_i \cap F| \ge 2\}$.
    As observed above, $C$ is a vertex cover of $H$.
    For each $v_i \in V(H) \setminus C$, it holds that $B_i \cap F = \emptyset$.
    As $x_i$ and $y_i$ are free in $G_f$, $F$ contains at least one edge $f_i$ incident to some $w \in W$.
    Moreover, this edge $f_i$ does not dominate $x_j$ and $y_j$ for $v_j \notin C$ with $j \neq i$, as $v_i$ and $v_j$ are not adjacent to each other in $H$. 
    Thus, for each $v_i \in V(H) \setminus C$, $F$ contains a distinct private edge $f_i$.
    Since $F$ contains at least two edges in $B_i$ for $v_i \in C$ and at least one private edge $f_i$ for $v_i \notin C$, we have $2|C| + n - |C| \le |F| \le n + k$.
    Therefore, $C$ is a vertex cover of $H$ of size at most $k$.
\end{proof}
This proves that \fminstcut{} is \NP-complete.
To make the input graph unweighted, we simply apply the reduction in \cref{lem:transform-unweighted-simple}.
This completes the proof of \cref{thm:fmst:NPC}.

\begin{theorem}\label{thm:afmst:NPC}
    \afminstcut{} is \NP-complete even for unweighted graphs.
\end{theorem}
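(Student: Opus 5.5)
Membership in \NP{} is routine: a certificate is the set $F$ together with the $s$-shore $X$ of the target cut. Checking that $F \cap \cutset_G(X) = \emptyset$ is trivial. By \cref{apd:lem:minstcut:anti-forcing}, it then remains to check that $F$ hits $\cutset_G(X')$ for every closed $X' \subsetneq X$ and every closed $X' \supsetneq X$. I would verify this in polynomial time by running, for each $u \in X \setminus C_s$, a reachability computation in $G_f - F$. This is the specified-cut algorithm, which I expect to be the polynomial-time result for that variant.

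For \NP-hardness I would again reduce from \textsc{Vertex Cover} on $4$-regular graphs $H$, reusing the layered construction of \cref{thm:fmst:NPC} with suitably adjusted weights. The goal is again a graph $G$ in which the maximum flow is unique, every edge is saturated left to right, and $G_f$ is the left-to-right orientation of $G$. Then, by \cref{thm:minstcut_representation}, minimum $s$--$t$ cuts correspond exactly to down-closed (with respect to the orientation) sets containing $s$. Since the target cut is not specified, the solution chooses $X$ and $F$ simultaneously, and this choice is where the hardness lies. Each vertex gadget $x_i,y_i,z_i$ is either inside $X$ (on the $s$-side) or outside it. By \cref{apd:lem:minstcut:anti-forcing}, the two options impose different local hitting requirements.
\begin{itemize}
\item If the gadget lies in $X$, then for every closed $X' \subsetneq X$ obtained by dropping a down-closed part of that gadget, $F$ must contain an edge of $B_i$ separating it. I would make this cost $2$ per vertex, since dropping $x_i$ alone and $y_i$ alone are independent requirements.
\item If the gadget lies outside $X$, a single edge suffices. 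It must enforce the superset cuts obtained by adding $z_i$ together with the required vertices of $W$.
\end{itemize}
The blocks $W_j$, of size $M=2n+1$, play the role of edge gadgets. A closed superset adding a vertex $w \in W_j$ must be hit, and because $|W_j| > n+k$ some $w \in W_j$ is untouched by $F$. This should force at least one endpoint of $e_j$ into the expensive configuration. I would therefore target the equivalence ``$H$ has a vertex cover of size at most $k$ if and only if $G$ has an anti-forcing set of size at most $n+k$''.

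The forward direction is a direct construction, as in \cref{lem:fmst:NPC:correctness}:
\begin{enumerate}
\item choose $X$ according to the cover $C$;
\item put two edges of $B_i$ into $F$ for each $v_i \in C$, and one private edge for each $v_i \notin C$;
\item check the conditions of \cref{apd:lem:minstcut:anti-forcing} directly, using that all arcs point left to right.
\end{enumerate}
For the converse, I would first normalize an arbitrary solution $(X,F)$ by exchange arguments, as in the proof of \cref{thm:afgm:main}. Edges of $F$ inside $W$-blocks can be replaced by gadget edges without loss. Then I would define $C$ as the set of gadgets contributing at least two edges, and use the pigeonhole argument on $W_j$ to show that $C$ is a vertex cover.

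The main obstacle is the precise gadget and weight design. The cost of a vertex must differ by exactly one between its two roles (cover versus non-cover). In addition, an unsatisfied edge $e_j$ must necessarily yield an unhit closed set, whichever side the $W_j$ vertices lie on. If the plain forcing gadget does not give this asymmetry, I would try first to attach to each $z_i$ a pendant path toward $t$ of heavy edges, so that $z_i$ lying outside $X$ costs one edge and lying inside costs two. Finally, \cref{lem:transform-unweighted-simple} removes the weights.
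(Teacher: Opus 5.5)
\textbf{Genuine gap in the hardness part.} You never commit to a construction, and the one you sketch does not work as described. Suppose $G_f$ is the left-to-right orientation of the layout of \cref{thm:fmst:NPC}, i.e.\ $s\to x_i,y_i\to z_i\to w\to t$. Then the closed sets, and hence the anti-forcing sets, do not depend on the weights. In that graph every anti-forcing set has at least $2n$ edges, for the following reasons.
\begin{itemize}
\item If $z_a,z_b\in X$ for some $e_j=v_av_b$, each $w\in W_j$ can be toggled on its own. Either $X\setminus\{w\}$ or $X\cup\{w\}$ is closed, and it adds only $z_aw,z_bw$ (resp.\ $wt$) to the cutset. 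This costs $M>n+k$ private edges, so a solution within budget has $W\cap X=\emptyset$.
\item A gadget entirely inside $X$ must then hit $\{x_iz_i,y_iz_i\}$, $\{sx_i,y_iz_i\}$, $\{sy_i,x_iz_i\}$ and $\{sx_i,sy_i\}$. These come from removing $\{z_i\}$, $\{x_i,z_i\}$, $\{y_i,z_i\}$ and $\{x_i,y_i,z_i\}$, and hitting them takes two edges of $B_i$.
\item A gadget with $z_i\notin X$ needs at least three edges. If it lies wholly outside $X$, it needs $x_iz_i$, $y_iz_i$ and some $z_iw$. If only $z_i$ is outside, it needs $sx_i$, $sy_i$ and some $z_iw$.
\end{itemize}
So your local costs ($2$ inside, $1$ outside) are wrong, and the cheap configuration is exactly the one forced to be independent. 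For $k<n$ there is never an anti-forcing set of size $n+k$, so your intended equivalence is false. Your normalization step also fails: $V\setminus\{t\}$ is closed with cutset $\{wt : w\in W\}$, so every solution must contain a $W$-edge. The pendant-path fallback is not analyzed at all.

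\textbf{The missing idea.} The paper places the edge gadgets \emph{upstream} of the vertex gadgets and gives each vertex gadget a one-edge cheap option.
\begin{itemize}
\item Each vertex gadget has vertices $r_i,q_i$, unit edges $sr_i,r_iq_i,q_it$, and a heavy shortcut $r_it$ of weight $d_iM$.
\item Each $v_iv_j\in E(H)$ gets $M$ vertices $p^h_{ij}$, with $sp^h_{ij}$ of weight $3$ and unit edges to $r_i,r_j,t$.
\item Consequently $G_f$ has arcs $s\to p^h_{ij}\to r_i,r_j,t$ and $s\to r_i\to q_i\to t$, $r_i\to t$.
\end{itemize}
If $r_i,r_j\notin X$, every $p^h_{ij}$ can be toggled, forcing $M$ edges. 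Hence $C=\{v_i : r_i\in X\}$ is a vertex cover. A vertex with $r_i\notin X$ needs one of $r_iq_i,r_it$, since $X\cup\{r_i\}$ is then closed. A vertex with $r_i\in X$ needs an edge entering $r_i$ plus one of $r_iq_i,q_it$. Together these give $|F|\ge n+|C|$. For the forward direction, $F=\{r_it : v_i\notin C\}\cup\{sr_i,q_it : v_i\in C\}$ works, verified via \cref{apd:lem:minstcut:anti-forcing}.

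\textbf{Membership in \NP{}.} Your approach is right in spirit, but the verifier is misstated. It should test reachability from $s$ in the reverse of $G_f[X]$ with the $F$-edges inside $X$ added in both orientations. Symmetrically, it should test reachability from $t$ in $G_f[V\setminus X]$ with the $F$-edges bidirected. Reachability in $G_f-F$ is not the right test. The paper instead raises the weights on $F$ and checks that $\lambda_{st}$ is unchanged and that the minimum cut becomes unique.
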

We first observe that the problem belongs to \NP.
To see this, let $G'$ be the graph obtained from $G$ by setting the weight of each edge in $F$ to a sufficiently large number.
Then, it is easy to verify that $F$ is an anti-forcing set for $\Lambda_{st}(G)$ if and only if $\lambda_{st}(G) = \lambda_{st}(G')$ and $G'$ has a unique minimum $s$--$t$ cut.
This condition can be checked in polynomial time.

Next, we show the \NP-hardness.
We prove the theorem by a polynomial-time reduction from \textsc{Vertex Cover}.
Our reduction first produces an edge-weighted graph, which we later transform into an unweighted graph.

\paragraph{Construction.}
Let $(H, k)$ be an instance of \textsc{Vertex Cover}.
We may assume that $0\le k \le |V(H)|$ and that $H$ has no isolated vertices.
We construct a graph $G$ as follows.
Let $n = |V(H)|$, $m = |E(H)|$, and $V(H) = \{v_1, \ldots, v_n\}$.
Let $M = 2n+1$.
For each $i\in [n]$, let $d_i$ be the degree of $v_i$ in $H$.

We first create two vertices $s$ and $t$ in $G$.
For each vertex $v_i\in V(H)$, we construct a vertex gadget as follows.
We create two vertices $r_i$ and $q_i$, and add four edges $sr_i$, $r_iq_i$, $q_it$, and $r_it$.
The edges $sr_i$, $r_iq_i$, and $q_it$ have weight $1$, and the edge $r_it$ has weight $d_iM$.
For each edge $v_iv_j\in E(H)$, we create an edge gadget as follows.
We create $M$ vertices $p_{ij}^1, p_{ij}^2, \ldots, p_{ij}^M$ in $G$.
For each $p_{ij}^h$ ($h\in [M]$), we add four edges $sp_{ij}^h$, $p_{ij}^ht$, $r_ip_{ij}^h$ and $r_jp_{ij}^h$.
The edge $sp_{ij}^h$ has weight $3$, and the other three edges have weight $1$.
See \cref{fig:afminst-cut-reduction} for an illustration of the construction of each gadget.
  
\begin{figure}[t]
  \centering
  \includegraphics[width=0.9\linewidth]{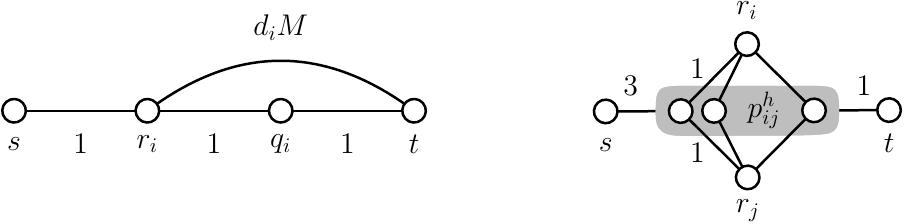}
  \caption{The vertex gadget (left) and the edge gadget (right) in the proof of \cref{thm:afmst:NPC}.
  For visibility, we simply draw edges $sp_{ij}^h$ and edges $p_{ij}^ht$ for $h \in [M]$ as single arcs.}
  \label{fig:afminst-cut-reduction}
\end{figure}

Observe that a maximum flow $f$ with value $n+3Mm$ in $G$ is uniquely determined: $f$ sends one unit along the path $(s, r_i, q_i, t)$ in the vertex gadget for $i \in [n]$ and three units via $p_{ij}^h$ in the edge gadget for $v_iv_j \in E(H)$ and $h \in [M]$, together with shortcut edges $r_it$ and $r_jt$ in the vertex gadgets.
Thus, $G_f$ is obtained from $G$ by orienting each edge of the vertex gadget from left to right in \cref{fig:afminst-cut-reduction} and from $s$ to $p_{ij}^h$ and from $p_{ij}^h$ to the other three vertices in each edge gadget.
A concrete example of $G_f$ is depicted in \cref{fig:agmst_Gf}.
\begin{figure}[t]
  \centering
  \includegraphics[width=0.6\linewidth]{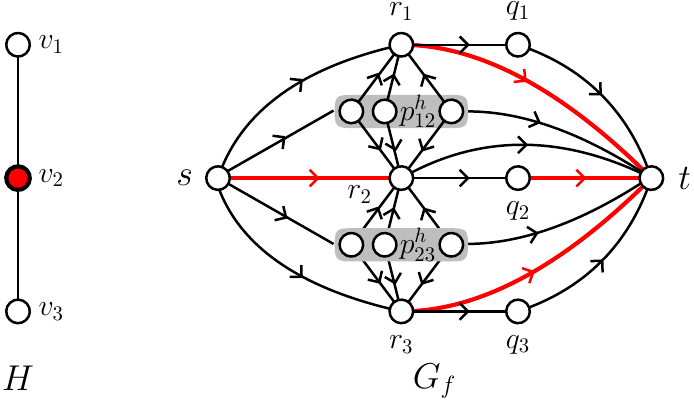}
  \caption{A graph $H$ and the directed graph $G_f$ in the proof of \cref{thm:afmst:NPC}. The red vertex corresponds to a vertex cover $C$ of $H$ and the set of red edges corresponds to an anti-forcing set of $G$.}
  \label{fig:agmst_Gf}
\end{figure}
We now show the correctness of our reduction.

\begin{lemma}
    $H$ has a vertex cover of size at most $k$ if and only if $G$ has an anti-forcing set for $\Lambda_{st}(G)$ of size at most $n+k$.
\end{lemma}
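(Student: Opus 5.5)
We prove both directions through the explicit description of $G_f$ and the characterization of closed sets from \cref{thm:minstcut_representation}, checking anti-forcing via \cref{apd:lem:minstcut:anti-forcing}. In $G_f$ all arcs are relevant: $s\to r_i$, $r_i\to q_i$, $q_i\to t$, $r_i\to t$, $s\to p_{ij}^h$, $p_{ij}^h\to t$, $p_{ij}^h\to r_i$ and $p_{ij}^h\to r_j$. A set $X\ni s$ with $t\notin X$ is therefore closed if and only if two conditions hold: (a) $r_i\in X$ implies that every $p_{ij}^h$ with $v_iv_j\in E(H)$ lies in $X$; and (b) $q_i\in X$ implies $r_i\in X$. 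The vertices $p_{ij}^h$ are unconstrained.

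\emph{Forward direction.} Let $C$ be a vertex cover with $|C|\le k$. Take the target shore
\[X=\{s\}\cup\{p_{ij}^h\}\cup\{r_i,q_i : v_i\in C\}.\]
Let $F$ consist of $sr_i$ and $r_iq_i$ for each $v_i\in C$, and of $r_it$ for each $v_i\notin C$. Then $|F|=n+|C|\le n+k$, and one checks directly that $F\cap\cutset_G(X)=\emptyset$.

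By \cref{apd:lem:minstcut:anti-forcing} it suffices to hit closed strict subsets and strict supersets of $X$.
\begin{itemize}
\item A closed $X'\supsetneq X$ can only add some $r_i$ with $v_i\notin C$ (and possibly $q_i$), since all $p$'s already lie in $X$. Then $r_it\in\cutset_G(X')\cap F$.
\item A closed $X'\subsetneq X$ may remove some $r_i$ with $v_i\in C$. In that case $sr_i\in F$ is cut.
\item Otherwise $X'$ removes only $p$'s and $q_i$'s. Removing any $p_{ij}^h$ forces, by (a), the removal of $r_i$ or $r_j$, at least one of which lies in $C$ since $C$ is a cover; that case is already handled. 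So only some $q_i$ with $v_i\in C$ are removed, and then $r_iq_i\in F$ is cut.
\end{itemize}

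\emph{Backward direction.} Let $F$ be an anti-forcing set with $|F|\le n+k$ and target shore $X$. The plan has three steps.

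First, I will show that every vertex gadget contains at least one edge of $F$ among $\{sr_i,r_iq_i,q_it,r_it\}$. For each $i$, toggling $q_i$ alone, or toggling the pair $\{r_i,q_i\}$ when allowed, yields another closed set whose cutset differs from $\cutset_G(X)$ only within gadget edges. If toggling $r_i$ forces $p$'s to move, I will instead use the toggles available from the closedness conditions.

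Second, I will use the pigeonhole argument of \cref{lem:fmst:NPC:correctness}: since $M=2n+1>n+k$, each edge gadget contains some $p_{ij}^h$ none of whose incident edges is in $F$. Toggling such a $p$ changes the cutset only on its own four edges, so it cannot be hit by $F$; hence $F$ must prevent that toggle from being closed-and-distinct in the relevant direction. From this I will deduce structural constraints. Roughly, $p\in X$ is forced, hence by closedness an $r_i$ in $X$ can drag things, and moving $r_i$ or $r_j$ across must be blocked by $F$.

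Third, I will define $C=\{v_i : |F\cap E(\text{gadget}_i)|\ge 2\}$ and show that $C$ is a vertex cover. For an edge $v_iv_j$ with both endpoints having only one gadget edge in $F$, I will exhibit an unhit closed set obtained by moving $r_i$ (or $r_j$) together with an untouched $p_{ij}^h$. The count $2|C|+(n-|C|)\le n+k$ then gives $|C|\le k$.

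The main obstacle is the backward direction. Because the target $X$ is arbitrary, the case analysis on which of $r_i,q_i$ lie in $X$ is delicate. I would first try to normalize $X$: replace edges of $F$ lying in edge gadgets by gadget edges such as $r_it$ or $sr_i$, and argue that one may assume $X$ contains all $p$'s, mimicking the replacement step in \cref{lem:afgm:hgmc:iff}.
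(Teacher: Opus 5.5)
\textbf{Verdict: the forward direction is correct, but the backward direction has a genuine gap in Step 3.} Your forward direction is a harmless variant of the paper's: it puts $q_i$ into $X$ for $v_i\in C$ and uses $r_iq_i$ instead of $q_it$.

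The gap is that $C=\{v_i : |F\cap\{sr_i,r_iq_i,q_it,r_it\}|\ge 2\}$ need not be a vertex cover. This fails even when $F$ is a valid anti-forcing set with $|F|\le n+k$ and $k$ equals the vertex cover number. The reason is that $F$ can block the removal of $r_i$ from $X$ with an edge $p_{ij}^hr_i$ of an edge gadget instead of with $sr_i$.

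Concretely, take $H=K_3$ and $k=2$. Let the target be $X=\{s,r_1,r_2\}\cup\{\text{all } p_{ij}^h\}$ and let $F=\{q_1t,\,q_2t,\,sr_2,\,r_3t,\,p_{12}^1r_1\}$, so $|F|=5=n+k$.
\begin{itemize}
\item $X$ is closed and $F\cap\Gamma_G(X)=\emptyset$.
\item Every closed strict superset of $X$ contains $q_1$, $q_2$ or $r_3$. It is therefore hit by $q_1t$, $q_2t$ or $r_3t$.
\item A closed strict subset $X'$ either misses $r_2$, and is hit by $sr_2$, or contains $r_2$. In the latter case it contains $p_{12}^1$, and it cannot contain $r_1$, since it would then contain every $p_{ij}^h$ and equal $X$. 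So it is hit by $p_{12}^1r_1$.
\end{itemize}
By \cref{apd:lem:minstcut:anti-forcing}, $F$ is anti-forcing. Yet $v_1$ and $v_3$ each have only one gadget edge in $F$. Your $C=\{v_2\}$ therefore misses the edge $v_1v_3$, and the unhit closed set you promise to exhibit does not exist.

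The missing idea is to read the cover off the target shore rather than off gadget-edge counts. Your Step 2 pigeonhole, carried to its conclusion, gives exactly this:
\begin{itemize}
\item An untouched $p_{ij}^h$ must lie in $X$; otherwise $X\cup\{p_{ij}^h\}$ is closed and unhit.
\item $X\setminus\{p_{ij}^h\}$ must not be closed, so $r_i\in X$ or $r_j\in X$.
\end{itemize}
Hence $C'=\{v_i : r_i\in X\}$ is a vertex cover, and by closedness every $p_{ij}^h$ lies in $X$. Now charge edges of $F$ as follows:
\begin{itemize}
\item For $v_i\in C'$, one edge of $\{r_iq_i,q_it\}$, and one edge of $\{sr_i\}\cup\{p_{ij}^hr_i\}$, which is needed to hit the closed set $X\setminus\{r_i,q_i\}$.
\item For $v_i\notin C'$, one edge of $\{r_iq_i,r_it\}$, which is needed to hit $X\cup\{r_i\}$. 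This set is closed precisely because all $p$'s lie in $X$.
\end{itemize}
These edge sets are pairwise disjoint, so $2|C'|+(n-|C'|)\le n+k$. This is the paper's argument.

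Alternatively, the normalization you mention at the end does work, but only after the cover property is established. Any $p_{ij}^hr_i\in F$ then has $r_i\in X$. It can be replaced by $sr_i$, because $p_{ij}^hr_i$ crosses a closed $X'$ only when $p_{ij}^h\in X'$ and $r_i\notin X'$, in which case $sr_i$ crosses as well. After this replacement your gadget-count definition of $C$ contains $C'$.

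Note also that your Step 1, for vertices with $r_i\notin X$, relies on this same cover property. Step 2 must therefore come first.
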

\begin{proof}
  Suppose that $H$ has a vertex cover $C \subseteq V(H)$ with $|C|\le k$.
  Define an $s$--$t$ cut $\{X,V(G) \setminus X\}$ as
  $X \coloneqq \{s\} \cup \{r_i \colon v_i \in C\} \cup \{p_{ij}^h \colon v_iv_j\in E(H),\, h \in [M]\}$.
  Since there are no arcs from $V(G) \setminus X$ to $X$ in $G_f$, by~\cref{thm:minstcut_representation}, $\{X,V(G) \setminus X\}$ is indeed a minimum $s$--$t$ cut of $G$.

  Define $F \coloneqq \{r_it \colon v_i\notin C\}\cup\{sr_i,q_it \colon v_i\in C\}$.
  Then, we have $|F|=n+|C|\le n+k$ and $F \cap \cutset_{G}(X)=\emptyset$.
  It remains to prove that $\{X, V(G) \setminus X\}$ is the unique minimum $s$--$t$ cut whose cutset is disjoint from $F$, that is, $F$ hits all the cutsets $\cutset_{G}(X') \in \Lambda_{st}(G)$ whose $s$-shore $X'$ is distinct from $X$.
  We can assume that either $X' \subsetneq X$ or $X \subsetneq X'$ due to \cref{apd:lem:minstcut:anti-forcing}.
  
  Suppose that $X' \subsetneq X$.
  If $X \setminus X'$ contains $r_i$ for some $v_i \in C$, then $sr_i \in F$ hits $\cutset_{G}(X')$.
  Suppose otherwise that $X'$ contains $r_i$ for all $v_i \in C$.
  This implies that $X \setminus X'$ contains some $p_{ij}^h$.
  As $C$ is a vertex cover of $H$, at least one of $v_i$ and $v_j$ is contained in $C$.
  Since there are no arcs from $V(G) \setminus X'$ to $X'$ in $G_f$, both $r_i$ and $r_j$ are contained in $V(G) \setminus X'$, which leads to a contradiction.
  
  Suppose next that $X \subsetneq X'$.
  If $X'\setminus X$ contains $r_i$ for some $i$, then $v_i\notin C$, and hence $r_it\in F$ hits $\cutset_G(X')$.
  Otherwise, $X'\setminus X$ contains $q_i$ for some $i$.
  The arc $(r_i, q_i)$ and the closedness of $X'$ imply that $r_i\in X'$.
  Since no vertex $r_j$ belongs to $X'\setminus X$, we have $r_i\in X$, and therefore $v_i\in C$.
  Consequently, $q_it\in F$ hits $\cutset_G(X')$.

  Conversely, suppose that $G$ has an anti-forcing set $F$ with $|F|\le n + k$.
  Let $X$ be the $s$-shore of the unique minimum $s$--$t$ cut of $G$ satisfying $\cutset_{G}(X)\cap F = \emptyset$.
  Define $I \coloneqq \{v_i\in V(H) \colon r_i \notin X\}$.
  By~\cref{thm:minstcut_representation}, $X$ is closed in $G_f$, that is, there are no arcs from $V(G) \setminus X$ to $X$ in $G_f$.

  We first show that $I$ is an independent set of $H$.
  Suppose that $I$ has two adjacent vertices $v_i$ and $v_j$.
  By the definition of $I$, we have $r_i, r_j \notin X$.
  If $X$ contains $p_{ij}^h$ for some $h \in [M]$, then $X \setminus \{p_{ij}^h\}$ is also an $s$-shore of another minimum $s$--$t$ cut of $G$, as the heads of outgoing arcs incident to $p_{ij}^h$, namely $r_i$, $r_j$, and $t$, are not contained in $X$.
  Since $\cutset_{G}(X \setminus \{p_{ij}^h\}) \setminus \cutset_{G}(X) = \{sp_{ij}^h\}$, $F$ must contain $sp_{ij}^h$.
  Similarly, if $X$ does not contain $p_{ij}^h$ for some $h \in [M]$, then $X \cup \{p_{ij}^h\}$ is also an $s$-shore of another minimum $s$--$t$ cut of $G$.
  Since $\cutset_{G}(X \cup \{p_{ij}^h\}) \setminus \cutset_{G}(X) = \{p_{ij}^hr_i, p_{ij}^hr_j, p_{ij}^ht\}$, $F$ must contain at least one of these edges.
  This implies that $F$ contains at least $M$ edges, contradicting the assumption $n + k < M$. 
  Thus, $I$ is an independent set of $H$, and hence $C \coloneqq V(H) \setminus I$ is a vertex cover of $H$.

  Next, we show that $|C| \le k$.
  We first consider a vertex $v_i\in I$.
  Since $r_i \notin X$ and $G_f$ has an arc $(r_i, q_i)$, we have $q_i \notin X$.
  Since $I$ is an independent set of $H$, we have $v_j \notin I$ for every adjacent vertex $v_j$, and hence $r_j \in X$.
  As $G_f$ has an arc $(p_{ij}^h, r_j)$, $p_{ij}^h\in X$ for every $h\in[M]$.
  Consequently, $X \cup \{r_i\}$ is closed in $G_f$, and hence $F$ contains at least one edge in $\cutset_G(X \cup \{r_i\})$ due to \cref{apd:prop:minstcut:anti-forcing}.
  In particular, $F$ contains either $r_iq_i$ or $r_it$.
  
  We next consider a vertex $v_i \notin I$.
  Since $r_i \in X$ and $G_f$ has an arc $(p_{ij}^h, r_i)$, $p_{ij}^h\in X$ for every adjacent vertex $v_j$ and $h\in[M]$.
  If $X$ contains $q_i$, $X \setminus \{q_i\}$ is also an $s$-shore of another minimum $s$--$t$ cut of $G$, as the heads of outgoing arcs incident to $q_i$ are not contained in $X$. 
  As $\cutset_G(X \setminus \{q_i\}) \setminus \cutset_G(X) = \{r_iq_i\}$, $F$ must contain $r_iq_i$.
  Moreover, as $X \setminus \{r_i, q_i\}$ is also closed in $G_f$, $F$ must contain at least one edge of $\{sr_i\}\cup \{p_{ij}^h r_i \colon v_iv_j\in E(H),\ h\in[M]\}$.
  Similarly, if $X$ does not contain $q_i$, both $X \setminus \{r_i\}$ and $X \cup \{q_i\}$ are closed in $G_f$.
  This implies that $F$ must contain at least one edge of $\{sr_i\}\cup \{p_{ij}^h r_i \colon v_iv_j\in E(H),\ h\in[M]\}$ and must also contain $q_it$.

  Therefore, $F$ contains at least one edge for each $v_i \in I$ and at least two edges for each $v_i \in C$.
  These edges are disjoint for distinct $i, j \in [n]$.
  Hence, we have $2|C| + |I| \le |F| \le n + k$, implying that $|C| \le k$.
\end{proof}

By \cref{lem:transform-unweighted-simple}, the construction can be made unweighted.
As all edge weights are polynomially bounded in the size of $H$, the transformation can be done in polynomial time.
This completes the proof of \cref{thm:afmst:NPC}.

\subsection{Forcing/Anti-forcing a specified minimum $s$--$t$ cut}

In stark contrast to the above two hardness results, we show that \fminstcut{} and \afminstcut{} with a specified cut can be respectively reduced to the problems of finding a minimum $T$-edge cover and a minimum-weight arborescence, which can be done in polynomial time~\cite{lawler1976combinatorial,GabowGST86}.
\begin{theorem}\label{thm:fmstsc:poly}
    \fminstcut{} with a specified cut is solvable in $\bigoh(nm)$ time, where $n$ and $m$ are the numbers of vertices and edges in $G$.
\end{theorem}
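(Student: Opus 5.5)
By \cref{lem:minstcut:forcing}, with the specified $s$-shore $X$ fixed, we must find a minimum $F \subseteq \cutset_G(X)$ such that every free vertex is dominated in $G_f$ by some arc of $F$. The plan is to turn this domination condition into a covering condition on a small family of strongly connected components, and then solve it as a minimum edge cover in a bipartite graph.

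First compute a maximum flow $f$, the digraph $G_f$ and its strongly connected components in $\bigoh(nm)$ time. Since $X$ is closed, every edge of $\cutset_G(X)$ is saturated and appears in $G_f$ as a single arc $(a,b)$ with $a \in X$ and $b \notin X$.

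Next we characterize domination. For a free vertex $w \in X$, any $s$--$t$ walk through $w$ and $(a,b)$ must visit $w$ before $(a,b)$: otherwise it would re-enter the closed set $X$ after leaving it. Moreover, the subwalk from $w$ to $a$ stays inside $X$. Conversely, a $w$--$a$ path can be extended to an $s$--$t$ walk by \cref{obs:unique-source-sink}. Hence $w \in X$ is dominated by $(a,b)$ if and only if $w$ reaches $a$ in $G_f[X]$. Symmetrically, a free vertex $w \notin X$ is dominated by $(a,b)$ if and only if $b$ reaches $w$ in $G_f[V(G)\setminus X]$.

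We then restrict attention to extremal components. Let $\mathcal{T}_1$ be the set of sink components of the condensation of $G_f[X]$ other than $C_s$. Let $\mathcal{T}_2$ be the set of source components of the condensation of $G_f[V(G)\setminus X]$ other than $C_t$. Every vertex of $X$ reaches a sink component of $G_f[X]$. A sink component $K$ can reach a tail $a$ only if $a \in K$. Since $C_s$ contains no free vertices, the condition on $X$ becomes: every $K \in \mathcal{T}_1$ contains the tail of some arc of $F$. Symmetrically, every $K \in \mathcal{T}_2$ must contain the head of some arc of $F$.

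We now form an auxiliary bipartite multigraph $B$ on $\mathcal{T}_1 \cup \mathcal{T}_2$. We add an edge for each cut arc $(a,b)$ whose tail lies in some $K_1 \in \mathcal{T}_1$ and whose head lies in some $K_2 \in \mathcal{T}_2$. Cut arcs touching only one target component become "half-edges" covering that single node. The task is then exactly a minimum edge cover of $\mathcal{T}_1 \cup \mathcal{T}_2$ in which half-edges are allowed. Feasibility holds because, by \cref{obs:unique-source-sink}, each $K \in \mathcal{T}_1$ has an outgoing relevant arc. Since $K$ is a sink of $G_f[X]$, this arc must leave $X$, so it is a cut arc. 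The same argument applies to $\mathcal{T}_2$. By the standard Gallai-type argument, the optimum equals $|\mathcal{T}_1|+|\mathcal{T}_2|-\nu$, where $\nu$ is the maximum matching size in $B$. The optimum is attained by taking a maximum matching and covering each unmatched node with any one incident cut arc. Bipartite matching runs in $\bigoh(m\sqrt n)$ time, so the total running time is dominated by the $\bigoh(nm)$ flow computation. We then compare the optimum with $k$.

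The main obstacle is the domination characterization. We must check carefully that restricting to sink and source components loses nothing, and that walks cannot cross the cut twice. This follows from the closedness of $X$ via \cref{cor:minstcut_digraph}. Once that step is established, the remaining computation is routine.
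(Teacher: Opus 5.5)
Your proposal is correct and follows essentially the same route as the paper: via \cref{lem:minstcut:forcing} you reduce to covering the sink components of $G_f[X]$ other than $C_s$ and the source components of $G_f[V\setminus X]$ other than $C_t$ by cut edges. This is exactly the paper's $(Y_H\cup\overline{Y}_H)$-edge cover in the bipartite multigraph of strongly connected components (your half-edges are just its edges to non-terminal nodes), solved by maximum matching. The one step to state more explicitly is why a free vertex never reaches only $C_s$ (symmetrically, is never reached only from $C_t$): by \cref{obs:unique-source-sink}, $s$ reaches every vertex, so any vertex reaching $C_s$ lies in $C_s$; the paper notes this in its sufficiency direction.
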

\begin{proof}
    Let $G = (V, E, \weight)$ be an input graph and let $\{X, V\setminus X\}$ be a given minimum $s$--$t$ cut of $G$ with $s \in X$.
    We can assume that $G$ is connected.
    Let $f$ be an arbitrary maximum flow in $G$ and let $G_f$ be defined as in \cref{thm:minstcut_representation}.
    By~\cref{lem:minstcut:forcing}, it suffices to find a smallest edge set $F \subseteq \cutset_G(X)$ by which all the free vertices in $G_f$ are dominated.

    To this end, we construct a bipartite multigraph $H$ with vertex set $X_H \cup \overline{X}_H$ as follows.
    Let $\mathcal C_X$ and $\mathcal C_{\overline{X}}$ be the sets of strongly connected components in $G_f[X]$ and in $G_f[V \setminus X]$, respectively.
    The vertex set of $H$ is defined as $X_H = \{v_{C} : C \in \mathcal C_X\}$ and $\overline{X}_H = \{w_C : C \in \mathcal C_{\overline{X}}\}$.
    For each edge $xy \in \cutset_G(X)$ with $x \in X$ and $y \notin X$, we add an edge between $v_C$ and $w_D$, where $C \in \mathcal C_X$ and $D \in \mathcal C_{\overline{X}}$ are the strongly connected components that $x$ and $y$ belong to, respectively.
    Since $H$ is a multigraph, the set of edges in $H$ bijectively corresponds to $\cutset_G(X)$.
    Recall that $C_s$ and $C_t$ are the strongly connected components in $G_f$ that contain $s$ and $t$, respectively.
    Let $Y_H$ consist of the vertices $v_C$ such that $C \neq C_s$ and no arc of $G_f[X]$ is outgoing from $C$ to another strongly connected component in $\mathcal C_{X}$. 
    Similarly, let $\overline Y_H$ consist of the vertices $w_C$ such that $C\neq C_t$ and no arc of $G_f[V\setminus X]$ is incoming to $C$ from another strongly connected component in $\mathcal C_{\overline{X}}$.
    
    For a vertex set $T$ of a graph, a \emph{$T$-edge cover} is a set $F$ of edges such that for every vertex in $T$, there is an edge incident to it in $F$.
    A minimum-cardinality $T$-edge cover of a graph can be computed in $\bigoh(nm)$ time by a standard maximum matching algorithm~\cite{lawler1976combinatorial}.
    We now claim that $F$ is a forcing set for $\Lambda_{st}(G)$ with $F \subseteq \cutset_G(X)$ if and only if $F$ is a $(Y_H \cup \overline{Y}_H)$-edge cover of $H$.

    Suppose that $F$ is a forcing set for $\Lambda_{st}(G)$.
    By~\cref{lem:minstcut:forcing}, $F$ dominates all the free vertices in $G_f$.
    In particular, $F$ dominates all the vertices in each component in $\mathcal C_X \setminus \{C_s\}$ and in $\mathcal C_{\overline{X}} \setminus \{C_t\}$.
    Let $C\in \mathcal{C}_X$ with $v_C\in Y_H$, and let $uv \in F$ where $u\in X$ and $v\in V\setminus X$, be an edge that dominates a vertex of $C$.
    Then that vertex can reach $u$ in $G_f$.
    Since $C$ has no outgoing arc to another strongly connected component in $\mathcal{C}_X$, we have $u\in C$.
    Hence, $F$ contains an edge corresponding to an edge of $H$ incident with $v_C$.
    The analogous statement holds for $w_C\in\overline{Y}_H$.
    Thus, $F$ is a $(Y_H \cup \overline{Y}_H)$-edge cover of $H$.

    Conversely, suppose that $H$ has a $(Y_H\cup\overline Y_H)$-edge cover $F$.
    Let $w$ be a free vertex in $X$.
    Let $C$ be a strongly connected component that is reachable from $w$ and has no outgoing arc in $G_f[X]$.
    Then, we have $C \neq C_s$ due to \cref{obs:unique-source-sink}.
    Let $uv$ be the edge of $G$ corresponding to an edge of $F$ incident with $v_C$, where $u\in C$ and $v\notin X$.
    Since $C$ is reachable from $w$ and strongly connected, $w$ can reach $u$.
    By~\cref{obs:unique-source-sink}, there is an $s$--$t$ walk passing through both $w$ and $(u,v)$, so $uv$ dominates $w$.
    Symmetrically, every free vertex in $V \setminus X$ is dominated by an edge in $F$.
    Hence, by~\cref{lem:minstcut:forcing}, $F$ is a forcing set.
\end{proof}
\begin{theorem}\label{thm:afmstsc:poly}
    \afminstcut{} with a specified cut is solvable in $\bigoh(nm)$ time, where $n$ and $m$ are the numbers of vertices and edges in $G$.
\end{theorem}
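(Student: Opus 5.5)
The plan is to use \cref{apd:lem:minstcut:anti-forcing} to split the problem into two independent subproblems, one inside $X$ and one inside $V\setminus X$, and to solve each as a minimum-weight arborescence problem~\cite{GabowGST86}. First compute a maximum flow $f$ and $G_f$ in $\bigoh(nm)$ time. Let $\{X,V\setminus X\}$ be the specified cut with $s\in X$. A feasible $F$ lies in $E(G)\setminus\cutset_G(X)$, so $F=F_X\cup F_{\overline X}$ with $F_X\subseteq E(G[X])$ and $F_{\overline X}\subseteq E(G[V\setminus X])$.

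By \cref{apd:lem:minstcut:anti-forcing}, $F$ is anti-forcing iff it hits $\cutset_G(X')$ for every closed $s$-shore $X'\subsetneq X$ and every closed $s$-shore $X'\supsetneq X$. Fix $X'\subsetneq X$ and put $Z=X\setminus X'$. Since $X$ is closed, $X'$ is closed in $G_f$ iff $G_f$ has no arc from $Z$ to $X'$. The edges of $\cutset_G(X')$ not in $\cutset_G(X)$ are exactly the edges between $X'$ and $Z$. By closedness, each such edge appears in $G_f$ only as an arc from $X'$ into $Z$. Hence:
\begin{itemize}
\item only $F_X$ can hit these cutsets;
\item the condition for $X'\subsetneq X$ reads: for every nonempty $Z\subseteq X\setminus\{s\}$ with no arc of $G_f[X]$ leaving $Z$, some arc of $G_f[X]$ whose edge lies in $F_X$ enters $Z$.
\end{itemize}
Symmetrically, the sets $X'\supsetneq X$ constrain only $F_{\overline X}$. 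Here the condition concerns nonempty sets $Z\subseteq (V\setminus X)\setminus\{t\}$ with no arc of $G_f[V\setminus X]$ entering $Z$, and some arc of $F_{\overline X}$ must leave $Z$. The two subproblems are therefore independent, and the optimum is the sum of their optima.

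For the $X$-side, build a digraph $D_X$ on $X$. Each arc of $G_f[X]$ gets weight $1$. In addition, for every arc $(a,b)$ of $G_f[X]$, add the reverse arc $(b,a)$ with weight $0$; call this set of reverse arcs $R$. Note that an arc of $R$ enters $Z$ iff the original arc leaves $Z$. The key claim is the following: an edge set $F_X$ satisfies the condition above iff the weight-$1$ arcs coming from $F_X$ together with $R$ contain a spanning arborescence of $D_X$ rooted at $s$. To prove it, use the standard fact that a spanning $s$-arborescence exists iff every nonempty $Z\subseteq X\setminus\{s\}$ has an entering arc. Sets $Z$ with an outgoing arc of $G_f[X]$ already get an entering arc from $R$. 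Sets $Z$ with no outgoing arc get none from $R$, so they must receive an arc coming from $F_X$. Consequently:
\begin{itemize}
\item any feasible $F_X$ yields an arborescence of weight at most $|F_X|$;
\item conversely, the weight-$1$ arcs of a minimum arborescence give a feasible $F_X$ of size at most its weight. An edge with two anti-parallel arcs may be counted twice, but this only makes the inequality looser.
\end{itemize}
So the minimum arborescence weight equals $\min|F_X|$. For the $\overline X$-side, apply the same construction to the reversal of $G_f[V\setminus X]$ with root $t$.

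The main obstacle is the precise translation step. One must check that closedness of $X'$ in the whole of $G_f$ is equivalent to "no arc from $Z$ to $X'$" inside $G[X]$, which uses that $X$ itself is closed. One must also check that the edges of $\cutset_G(X')\setminus\cutset_G(X)$ are seen only as arcs entering $Z$. Once these hold, the arborescence equivalence is routine. For the running time, the min-weight arborescence is computable in $\bigoh(m+n\log n)$ time~\cite{GabowGST86}, so the $\bigoh(nm)$ max-flow computation dominates, giving $\bigoh(nm)$ total. If $k$ is at least the sum of the two optima, output yes.
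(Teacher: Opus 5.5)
Your proposal is correct and follows essentially the same route as the paper. Both arguments use \cref{apd:lem:minstcut:anti-forcing} to split the problem into independent $X$-side and $(V\setminus X)$-side hitting problems, and both solve each side as a minimum-weight arborescence problem on weight-$0$ reversed arcs of $G_f[X]$ plus weight-$1$ forward arcs. The differences are cosmetic: you put weight-$1$ copies on all arcs rather than only on relevant ones, which is harmless because non-relevant edges lie in no minimum $s$--$t$ cutset, and you verify the reduction with the entering-arc characterization of arborescences instead of the paper's direct reachability argument.
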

\begin{proof}
    Let $G = (V, E, \weight)$ be an input graph and let $\{X, V\setminus X\}$ be a given minimum $s$--$t$ cut of $G$ with $s \in X$.
    We can assume that $G$ is connected.
    By~\cref{apd:lem:minstcut:anti-forcing}, it suffices to find a smallest edge set $F \subseteq E \setminus \cutset_G(X)$ that hits the cutset $\cutset_G(X')$ of each minimum $s$--$t$ cut $\{X', V\setminus X'\}$ satisfying either $X' \subsetneq X$ or $X \subsetneq X'$.
    Let $\mathcal S_X$ (resp.~$\mathcal S_{\overline{X}}$) be the set of all cutsets $\cutset_{G}(X') \in \Lambda_{st}(G)$ whose $s$-shore $X'$ satisfies $X' \subsetneq X$ (resp.~$X \subsetneq X'$).
    Since edges in $G[V\setminus X]$ (resp.~$G[X]$) do not hit any cut in $\mathcal S_{X}$ (resp.~$\mathcal S_{\overline{X}}$), we can independently compute a smallest hitting set for each of $\mathcal S_X$ and $\mathcal S_{\overline{X}}$.
    Thus, by considering symmetry, it suffices to compute a smallest set $F \subseteq E(G[X])$ that hits $\mathcal S_X$.
    This problem is known to be solvable in polynomial time~\cite{Frank81,NalamS25}.
    To be self-contained, we give an explicit reduction from this problem to that of finding a minimum-cost arborescence in an arc-weighted directed graph.

    Let $D$ be a directed graph on vertex set $X$. 
    For every arc $(u, v)$ of $G_f[X]$, add the opposite arc $(v, u)$ to $D$ with weight~0. 
    Moreover, if $(u,v)$ is relevant, add the arc $(u, v)$ to $D$ with weight~1.
    The sets of arcs of weight~1 and those of weight~0 are denoted by $A_1$ and $A_0$, respectively.
    We claim that the minimum weight of an (out-)arborescence rooted at $s$ in $D$ is equal to the smallest cardinality of a set $F\subseteq E(G[X])$ that intersects every cutset in $\mathcal{S}_X$.
    
    Let $F \subseteq E(G[X])$ be a hitting set of $\mathcal{S}_{X}$ with $|F| \le k$.
    We construct a spanning subgraph $D'$ of $D$ by setting $E(D') = \{(u, v) \in A_1: uv \in F\} \cup A_0$.
    Since the total weight of $D'$ is at most $k$, it suffices to show that $D'$ has an arborescence rooted at $s$, that is, every vertex in $X$ is reachable from $s$ in $D'$.
    Suppose that there is a vertex in $X$ that is not reachable from $s$ in $D'$, and let $Y$ be the set of vertices not reachable from $s$.
    For each arc in $D'$ with weight~$1$, $D'$ has an opposite arc with weight~$0$.
    Thus, every arc from $Y$ to $X \setminus Y$ is in $A_0$.
    This implies that $X \setminus Y$ is closed in $G_f$.
    Since $F$ hits $\cutset_G(X\setminus Y)\in\mathcal{S}_X$ and $F \subseteq E(G[X])$, some edge of $F$ joins $X\setminus Y$ and $Y$, whose corresponding arc belongs to $D'$, a contradiction.

    Conversely, suppose that $D$ has an arborescence $T$ rooted at $s$ with weight at most~$k$.
    We define $F = \{uv \in E(G[X]) : (u, v) \in E(T) \cap A_1\}$ and claim that $F$ hits all the cutsets in $\mathcal S_X$.
    Let $\cutset_{G}(X') \in \mathcal S_X$ with $s$-shore $X' \subsetneq X$.
    Let $v \in X \setminus X'$ and let $P$ be a directed path from $s$ to $v$ in $T$.
    Let $(u, w)$ be the first arc of $P$ leaving $X'$.
    Since $X'$ is closed in $G_f$, $(u, w)$ cannot belong to $A_0$.
    Therefore, $(u, w)\in A_1$, and hence $uw\in F$, which hits $\cutset_{G}(X')$.

    The directed graph $D$ can be computed in $\bigoh(nm)$ time.
    It is well known that a minimum-weight arborescence in an arc-weighted directed graph can be computed in $\bigoh(m + n\log n)$ time~\cite{GabowGST86}.
    This completes the proof of the theorem.
\end{proof}